\DeclareMathAlphabet{\mathpzc}{OT1}{pzc}{m}{it}
\begin{document}

\theoremstyle{plain}
\newtheorem{theorem}{Theorem}[section]
\newtheorem{lemma}[theorem]{Lemma}
\newtheorem{proposition}[theorem]{Proposition}
\newtheorem{corollary}[theorem]{Corollary}
\newtheorem{Ass}[theorem]{Assumption}

\theoremstyle{definition}
\newtheorem{discussion}[theorem]{Discussion}
\newtheorem{definition}[theorem]{Definition}
\newtheorem{remark}[theorem]{Remark}
\newtheorem{example}[theorem]{Example}
\newtheorem{condition}[theorem]{Condition}
\newtheorem{agreement}[theorem]{Agreement}
\newtheorem{SA}[theorem]{Standing Assumption}
\newtheorem*{observation}{Observations}
\newtheorem*{RL}{Comments on Related Literature}
\newtheorem*{setting}{Setting}

\renewcommand{\chapterautorefname}{Chapter} 
\renewcommand{\sectionautorefname}{Section} 

\crefname{lemma}{lemma}{lemmas}
\Crefname{lemma}{Lemma}{Lemmata}
\crefname{corollary}{corollary}{corollaries}
\Crefname{corollary}{Corollary}{Corollaries}

\newcommand{\m}{\mathfrak{m}}
\newcommand{\mo}{\mathfrak{m}^\circ}
\newcommand{\mon}{\mathfrak{m}^{\circ, n}}
\newcommand{\tm}{\tilde{\mathfrak{m}}}
\newcommand{\M}{\tilde{\mathfrak{m}}}
\newcommand{\Mo}{\tilde{\mathfrak{m}}^\circ}
\newcommand{\Mon}{\tilde{\mathfrak{m}}^{\circ, n}}
\newcommand{\s}{\mathfrak{s}}
\newcommand{\ts}{\tilde{\mathfrak{s}}}
\renewcommand{\S}{S} 
\renewcommand{\v}{\mathfrak{v}}
\renewcommand{\u}{\mathfrak{u}}
\newcommand{\q}{\mathfrak{q}}
\def\stackrelboth#1#2#3{\mathrel{\mathop{#2}\limits^{#1}_{#3}}}
\renewcommand{\l}{\mathscr{l}}
\newcommand{\E}{\mathsf{E}} 
\renewcommand{\P}{\mathsf{P}}
\newcommand{\Po}{\mathds{P}^\circ}
\newcommand{\Pon}{\mathds{P}^{\circ, n}}
\newcommand{\Q}{\mathsf{Q}} 
\newcommand{\tP}{\tilde{\mathsf{P}}}
\newcommand{\tQ}{\tilde{\mathsf{Q}}}
\newcommand{\Qo}{\tilde{\mathds{P}}^\circ}
\newcommand{\Qon}{\tilde{\mathds{P}}^{\circ, n}}
\newcommand{\W}{\mathds{W}}
\newcommand{\on}{\operatorname}
\newcommand{\oU}{U}
\newcommand{\of}{[\hspace{-0.06cm}[}
\newcommand{\gs}{]\hspace{-0.06cm}]}
\newcommand{\ofr}{(\hspace{-0.09cm}(}
\newcommand{\gsr}{)\hspace{-0.09cm})}
\renewcommand{\emptyset}{\varnothing}

\renewcommand{\theequation}{\thesection.\arabic{equation}}
\numberwithin{equation}{section}

\newcommand{\1}{\mathds{1}}
\newcommand{\f}{\mathfrak{f}}
\newcommand{\g}{\mathfrak{g}}
\newcommand{\e}{\mathfrak{e}}
\renewcommand{\t}{T}
\newcommand{\bR}{\mathbb{R}}
\newcommand{\cF}{\mathcal{F}}
\newcommand{\cG}{\mathcal{G}}
\newcommand{\cA}{\mathcal{A}}
\newcommand{\x}{\mathfrak{z}}

\newcommand*{\Y}{Y}
\newcommand*{\U}{U}
\newcommand*{\Z}{Z}
\newcommand*{\B}{B}
\newcommand*{\V}{V}

\newcommand*{\ol}{\overline}
\newcommand*{\wt}{\widetilde}
\newcommand*{\canOmega}{\ol\Omega}
\newcommand*{\canF}{\overline{\mathcal{F}}\vphantom{()}}
\newcommand*{\canbfF}{\ol{\mathbf F} \vphantom{()}}
\newcommand*{\canY}{\ol\Y}
\newcommand*{\canU}{\ol\U}
\newcommand*{\canV}{\ol\V}
\newcommand*{\canZ}{\ol\Z}
\newcommand*{\cantheta}{\bar\theta}
\newcommand*{\wtY}{\wt\Y}
\newcommand*{\wtZ}{\wt\Z}
\newcommand{\ma}{\m_{\on{a}}}
\newcommand{\ms}{\m_{\on{s}}}
\newcommand{\X}{X}

\newcommand*{\siQ}{{}^\sigma\hspace{-1pt}\Q}
\newcommand*{\siB}{{}^\sigma\mathbb B}
\newcommand*{\siS}{{}^\sigma\!\S}
\newcommand*{\sibfF}{{}^\sigma\mathbf F}
\newcommand*{\sicF}{{}^\sigma\!\cF}

\title[NA and ACLMMs for diffusions]{No arbitrage and the existence of ACLMMs\\in general diffusion models}

\author[D. Criens]{David Criens}
\address{D. Criens - University of Freiburg, Ernst-Zermelo-Str. 1, 79104 Freiburg, Germany.}
\email{david.criens@stochastik.uni-freiburg.de}

\author[M. Urusov]{Mikhail Urusov}
\address{M. Urusov - University of Duisburg-Essen, Thea-Leymann-Str. 9, 45127 Essen, Germany.}
\email{mikhail.urusov@uni-due.de}

\keywords{No arbitrage; absolutely continuous local martingale measure; one-dimensional general diffusion; scale function; speed measure.}

\makeatletter
\@namedef{subjclassname@2020}{\textup{2020} Mathematics Subject Classification}
\makeatother

\subjclass[2020]{60G44; 60H10; 60J60; 91B70; 91G15.}


\date{\today}

\allowdisplaybreaks

\begin{abstract}
	In a seminal paper, F.~ Delbaen and W.~Schachermayer proved that the classical NA (``no arbitrage'') condition implies the existence of an ``absolutely continuous local martingale measure'' (ACLMM). It is known that in general the existence of an ACLMM alone is not sufficient for NA. In this paper we investigate how close these notions are for single asset general diffusion market models. 
	We show that NA is equivalent to the existence of an ACLMM plus a mild regularity condition on the scale function and the absence of reflecting boundaries. For infinite time horizon scenarios, the regularity assumption and the requirement on the boundaries
can be dropped, showing equivalence between NA and the existence of an ACLMM. By means of counterexamples, we show that our characterization of NA for finite time horizons is sharp in the sense that neither the regularity condition on the scale function nor the absence of reflecting boundaries can be dropped.
\end{abstract}

\maketitle

\frenchspacing
\pagestyle{myheadings}


\section{Introduction}
Deciding about the existence of arbitrage opportunities is a fundamental problem in mathematical finance.
The concept that appears to be closest to the idea of ``making something out of nothing'' is the notion NA (``no arbitrage''). The celebrated fundamental theorem for asset pricing, which is due to J.~M.~Harrison and D.~M.~Kreps \cite{HK79} and J.~M.~Harrison and S.~R.~Pliska \cite{HP81} for finite probability spaces and due to R.~C.~Dalang, A.~Morton and W.~Willinger \cite{DMW90} for general discrete time models
(with a finite time horizon),
shows that NA is equivalent to the existence of an equivalent martingale measure (EMM).
In continuous time, the theory of arbitrage turns out to be significantly different from its discrete time counterpart (compare Chapters~V and~VII in the monograph \cite{shir} by A.~N.~Shiryaev).
In particular, the equivalence between NA and the existence of an EMM 
fails for general continuous time market models.
As a remedy, F.~Delbaen and W.~Schachermayer introduced the nowadays classical notion NFLVR (``no free lunch with vanishing risk'') for general continuous time semimartingale financial market models and proved it to be equivalent to the existence of an equivalent \(\sigma\)-martingale measure (see their monograph~\cite{DS2006} for an overview on what they achieved).
For general path-continuous semimartingale market models, F.~Delbaen and W.~Schachermayer~\cite{DS1995} also showed that NA implies the existence of an absolutely continuous local martingale measure (ACLMM). The converse is false in general. 
Still for path-continuous models, Y.~Kabanov and C.~Stricker~\cite{KabStr2005} and E.~Strasser~\cite{Strasser2005} filled this gap by proving that NA is in fact equivalent to the existence of ACLMMs
(with certain additional properties)
for all randomly shifted market models.
Although NA might not be equivalent to the existence of an ACLMM, it is interesting to understand how close these concepts are and to identify frameworks for which equivalence holds. 

\smallskip
In this paper, we consider so-called general diffusion models, which are single asset market models whose (discounted) asset price process is a one-dimensional path-continuous regular strong Markov process that is also a semimartingale. This framework is very rich. For example, it covers all It\^o diffusion models of the type 
\begin{align} \label{eq: intro SDE}
	d Y_t = \mu (Y_t) dt + \sigma (Y_t) dW_t,
\end{align}
under the famous Engelbert--Schmidt conditions,
but also models with local time effects such as skewness and stickiness.
Similar to SDEs of the type \eqref{eq: intro SDE}, general diffusion models can be characterized by two deterministic objects, the scale function and the speed measure (cf., e.g., \cite{breiman1968probability,freedman,RW2}), which we call the diffusion characteristics.
In the previous paper \cite{criensurusov23}, we proved deterministic characterizations of NA in terms of the diffusion characteristics. 

The purpose of the present paper is to relate NA to the non-deterministic criterion of the existence of an ACLMM. For this purpose, we distinguish between a finite and the infinite time horizon. For the former, we prove that NA is equivalent to the existence of an ACLMM plus the additional conditions that the scale function is a \emph{dc function} (i.e., the difference of two convex functions) and all accessible boundaries of the diffusion are absorbing. In case the time horizon is infinite, we even establish the equivalence of NA and the existence of an ACLMM. By means of counterexamples, we show that this is a feature of the global setting, i.e., for finite time horizons the additional conditions on the scale function and the boundaries cannot be dropped.

Our results show that there are numerous frameworks for which NA and the existence of an ACLMM are in fact equivalent. A particularly interesting class is given by It\^o diffusion models of the type~\eqref{eq: intro SDE} whose accessible boundaries are stipulated to be absorbing. Indeed, in such cases the scale function is even \(C^1\) with absolutely continuous derivative (and consequently, a dc function).

\smallskip
The remainder of this paper is organized as follows.
In Section~\ref{sec: NA cond}
we recall the NA notion,
the theorem of Delbaen--Schachermayer that NA implies the existence of an ACLMM,
the characterization of NA by Kabanov--Stricker and Strasser, and we provide an explicit counterexample showing that, in general, the existence of an ACLMM is strictly weaker than NA.
The general diffusion framework and our main results are presented in Section~\ref{sec: main} (canonical setting) and Section~\ref{sec: main2} (non-canonical setting).

\section{The NA condition} \label{sec: NA cond}
Let \(\mathbb{B} = (\Omega, \cF, (\cF_t)_{t \geq 0}, \P)\) be a probability space
with a right-continuous filtration
that supports a one-dimensional continuous semimartingale~\(\S\), which plays the role of a discounted asset price process.
We call the pair \((\mathbb{B}, \S)\) a {\em financial market}. 
Let \(L (\S)\) be the set of all predictable real-valued processes \(H = (H_t)_{t \geq 0}\) which are integrable
w.r.t. \(\S\).
In our financial context, the elements of \(L (\S)\) are called \emph{trading strategies}. To ease our presentation, we write 
\[
V^H \triangleq \int_0^\cdot H_s d \S_s
\]
for the {\em value process} associated to the trading strategy \(H \in L (\S)\).
\begin{definition}
	For \(c \in \bR_+\), a trading strategy \(H \in L (\S)\) is called {\em \(c\)-admissible} if \(\P\)-a.s. \(V^H \geq - c\). Further, we call a trading strategy {\em admissible} if it is \(c\)-admissible for some \(c \in \bR_+\).
\end{definition}

Next, we recall the definition of the classical no-arbitrage notion
\emph{NA}.

\begin{definition}[NA]\label{def:100224a1}
	We say that a strategy $H\in L(\S)$ realizes \emph{arbitrage} if
	\begin{enumerate}
		\item[\textup{(i)}]
		$H$ is admissible,
		
		\item[\textup{(ii)}]
		$V^H_\infty = \lim_{t \to \infty} V^H_t$ exists \(\P\)-a.s.,
		
		\item[\textup{(iii)}]
		$\P(V^H_\infty \ge0)=1$ and $\P(V^H_\infty > 0) > 0$.
	\end{enumerate}
	We say that the \emph{NA} condition holds for the market \((\mathbb{B}, \S)\) if there is no strategy realizing arbitrage.
\end{definition}

We remark that, in continuous-time models, it is necessary to consider admissible strategies, as non-admissible arbitrages (i.e., strategies satisfying only (ii)--(iii)) exist practically in any interesting model (e.g., in the classical Black--Scholes model).

\begin{definition}[ACLMM]
	We call a probability measure \(\Q\) on \((\Omega, \cF)\)
	an \emph{absolutely continuous local martingale measure}
	for the market \((\mathbb{B}, \S)\) if \(\Q\ll \P\)
	and \(\S\) is a \(\Q\)-local martingale.
	We use the abbreviation \emph{ACLMM} in the following.
\end{definition}

A complete stochastic characterization of NA for a finite time horizon within a Brownian setting was established by Levental and Skorohod \cite{LevSko1995}.
A necessary condition for NA for general continuous prices processes was given by Delbaen and Schachermayer \cite{DS1995}. We recall it in the following theorem.

\begin{theorem}\label{theo: FTAP NA}
	For a financial market \((\mathbb{B}, \S)\), 
	$$
	\text{NA}
	\quad\Longrightarrow\quad
	\text{there exists an ACLMM}.
	$$
\end{theorem}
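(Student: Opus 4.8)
The result is due to Delbaen--Schachermayer \cite{DS1995}, so the plan is to reconstruct the architecture of their argument. Since \(\S\) is a continuous semimartingale, I would begin from its canonical decomposition \(\S = \S_0 + M + A\), with \(M\) a continuous local martingale and \(A\) a continuous process of finite variation, both null at \(0\). Exhibiting an ACLMM amounts to exhibiting a nonnegative, uniformly integrable \(\P\)-martingale \(Z\) with \(Z_0 = 1\) for which \(Z\S\) is a \(\P\)-local martingale; then \(\Q\), defined by \(d\Q = Z_\infty\, d\P\), is an ACLMM by the Girsanov-type formula for absolutely continuous changes of measure (applicable here because the density process is continuous).

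The first step is to derive from NA the \emph{structure condition}: there is a predictable process \(\lambda\) with \(A = \int_0^\cdot \lambda_s\, d\langle M\rangle_s\) and \(\int_0^t \lambda_s^2\, d\langle M\rangle_s < \infty\) for all \(t\), \(\P\)-a.s. Both properties follow by contradiction from the absence of arbitrage. If \(A\) had a part singular to \(\langle M\rangle\) (or charged \(\{d\langle M\rangle = 0\}\)), trading a bounded amount in the direction of the sign of that part would produce a monotone, hence nontrivially nonnegative, value process. If instead the mean--variance tradeoff \(\widehat K = \int \lambda^2\, d\langle M\rangle\) became infinite at a finite time on a set of positive probability, then the wealth process \(\widehat V = \mathscr{E}\big(\int_0^\cdot \lambda_s\, d\S_s\big)\) — which is generated by the \(1\)-admissible strategy \(H = \widehat V\lambda\) because \(\int_0^\cdot H_s\, d\S_s = \widehat V - 1 \geq -1\) — would explode to \(+\infty\) on that set, and setting the strategy to \(0\) afterwards yields an arbitrage. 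This step requires some care with localization (so that the integrals exist and the strategies are genuinely admissible) but is comparatively routine.

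Given the structure condition, the natural candidate is \(Z = \mathscr{E}\big(-\int_0^\cdot \lambda_s\, dM_s\big)\), a well-defined continuous nonnegative local martingale with \(Z_0 = 1\); It\^o's product rule together with \(dA = \lambda\, d\langle M\rangle\) shows that \(Z\S\) is a \(\P\)-local martingale, so \(Z\) is a supermartingale density. The main obstacle is that \(Z\) need not be uniformly integrable — this failure is precisely what separates NA from NFLVR — so \(Z_\infty = \lim_{t\to\infty} Z_t\), which exists \(\P\)-a.s. by supermartingale convergence, may satisfy \(\E[Z_\infty] < 1\); then \(Z_\infty\,\P\) is only a sub-probability and, worse, its density process is no longer \(Z\), so one cannot directly conclude that \(\S\) is a local martingale under it. The way around this is to localize along stopping times \(T_n\uparrow\infty\) (\(\P\)-a.s.) on which the situation is benign, obtaining on each \(\of 0, T_n\gs\) an equivalent local martingale measure \(\Q_n\) with consistent restrictions, and then to pass to the limit while invoking NA once more to guarantee that the limiting object is a genuine probability measure — possibly with density degenerating to zero on an exceptional event, which is permitted for an ACLMM — rather than collapsing. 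Making this limiting argument work under mere NA (rather than NFLVR) is the technical heart of \cite{DS1995}, and that is where I expect the real difficulty to lie. (An alternative, functional-analytic route realises the ACLMM as a separating measure for the cone of bounded claims super-replicable from zero endowment via a Kreps--Yan/Hahn--Banach argument; there NA is used to establish the requisite closedness of that cone, which is again the crux.)
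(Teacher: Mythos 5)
This theorem is not proved in the paper; it is recalled from Delbaen--Schachermayer \cite{DS1995}, so your reconstruction has to stand on its own, and it does not: two of its steps are genuinely problematic. First, the claim that NA forces the mean--variance tradeoff \(K=\int_0^\cdot\lambda_s^2\,d\langle M\rangle_s\) to be finite at all finite times is false, and it would prove too much. What NA excludes is only an \emph{immediate} explosion of \(K\) after a stopping time (the ``immediate arbitrage theorem'' of \cite{DS1995}, whose proof is itself delicate); a continuous explosion of \(K\) at a finite time on a set of positive probability is compatible with NA, and it is exactly this possibility that makes the limiting measure absolutely continuous rather than equivalent --- the candidate density \(Z\) tends to \(0\) there. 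Were your claim correct, NA would give \(K_T<\infty\) a.s., hence (with the structure condition) no unbounded profit with bounded risk, hence NFLVR and an \emph{equivalent} local martingale measure, which is strictly stronger than the statement and false in general. The arbitrage you sketch from an explosion is also not one as written: the wealth \(\widehat V-1\) is bounded below by \(-1\) and large on the explosion set, but it may be strictly negative off that set, so item (iii) of Definition~\ref{def:100224a1} fails; turning such a profit into a genuine arbitrage is precisely where the immediacy of the explosion and the careful constructions of \cite{DS1995} (or of Levental--Skorohod \cite{LevSko1995}) are needed.

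Second, and decisively, the step that actually produces the ACLMM is missing. You correctly note that \(Z\) need not be uniformly integrable, that \(Z_\infty\,d\P\) need not be a probability measure, and that normalising it destroys the local martingale property; you then defer to ``localize along \(T_n\) and pass to the limit invoking NA,'' conceding that this is where the real difficulty lies. A consistent family of measures on \(\cF_{T_n}\) does not in general extend to a countably additive measure on \(\cF\), and controlling the loss of mass so that the limit is a probability measure under which \(\S\) remains a local martingale is the actual content of \cite{DS1995}, obtained there with substantially more machinery (their general FTAP apparatus) than a soft limiting argument; the alternative Kreps--Yan route you mention in passing faces the analogous closedness problem, which you also do not address. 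As written, the proposal is an outline of the known strategy whose central step is asserted rather than proved, so it does not establish the theorem.
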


As we are not aware of an explicit example in the literature showing that the converse is not true, we now present one.

\begin{example}\label{ex:290724a1}
Fix some $T\in(0,\infty)$, which will play the role of a finite time horizon, and consider a stochastic basis
\(\mathbb{B} = (\Omega, \cF, (\cF_t)_{t \geq 0}, \P)\)
that supports an $(\cF_t)_{t \geq 0}$-$\P$-Brownian motion $W=(W_t)_{t\ge0}$.
Define the process $Y = (Y_t)_{t \geq 0}$ with the dynamics
\begin{align*}
dY_t
&=
\begin{cases}
-dt+Y_t\,dW_t,&t<T_0(Y)\triangleq\inf\{t\ge0\colon Y_t=0\},
\\
dt,&t\ge T_0(Y),
\end{cases}
\\[1mm]
Y_0
&=
x_0>0.
\end{align*}
Throughout, we use the usual convention $\inf\emptyset\triangleq\infty$.
Notice that $Y$ is \emph{not} a strong Markov process.

In the following, we consider the market $(\mathbb B,Y_{\cdot\wedge T})$.
First of all, we claim that $H\triangleq\1_{(T_0(Y)\wedge T,T]}$
realizes arbitrage in this market.
Indeed, the identity
\[
V^H = \int_0^\cdot \1_{(T_0 (Y) \wedge T, T]} (s) \, ds
\] 
shows that \(H\) is admissible, that \(\P\)-a.s. \(V^H_\infty = T - T_0 (Y) \wedge T\) and that \(\P (V^H_\infty \geq 0) = 1\). Further, \(\P (V^H_\infty> 0) = \P(T_0 (Y) < T) > 0\) follows from \cite[Theorem~1.1]{bruggeman}. In summary, \(H\) realizes arbitrage and consequently, NA fails.

Next, we show that there exists an ACLMM in the market $(\mathbb B,Y_{\cdot\wedge T})$.
As the stopped process $Y_{\cdot\wedge T_0(Y)}$ is a diffusion, \cite[Example 3.15]{criensurusov23} implies that the market $(\mathbb B,Y_{\cdot\wedge T_0(Y)\wedge T})$ satisfies NA.
Thus, by Theorem~\ref{theo: FTAP NA}, there is an ACLMM $\Q$ in the market $(\mathbb B,Y_{\cdot\wedge T_0(Y)\wedge T})$.
Define the process 
$$
B_t\triangleq\int_0^t \frac1{Y_s}\,dY_s,\quad t<T_0(Y)\wedge T,
$$
and observe that $B$ is a continuous $(\cF_t)_{t \geq 0}$-$\Q$-local martingale on $[0,T_0(Y)\wedge T)$ with $\langle B,B\rangle^\Q_t=\langle B,B\rangle^\P_t=t$, $t<T_0(Y)\wedge T$.
We conclude from \cite[Exercise IV.3.28]{RY} that $B$ is an $(\cF_t)_{t \geq 0}$-$\Q$-Brownian motion on the stochastic interval $[0,T_0(Y)\wedge T)$.
By virtue of the dynamics
$$
dY_t=Y_t\,dB_t,\quad t<T_0(Y)\wedge T,
$$
this readily implies that  $\Q(T_0(Y)\ge T)=1$.
We conclude that $\Q$ is also an ACLMM for the market~$(\mathbb B,Y_{\cdot\wedge T})$.
\end{example}

For a finite time horizon, building upon \cite{DS1995}, Kabanov and Stricker \cite{KabStr2005} and Strasser \cite{Strasser2005} provided even necessary and sufficient conditions for NA.
We recall them in the next theorem.

\begin{theorem}\label{th:210324a1}
	Consider a financial market $(\mathbb B,\S)$ such that \(\S = \S_{\cdot \wedge T}\) for some time horizon $T \in (0, \infty)$.
	The following are equivalent:
	\begin{enumerate}
		\item[\textup{(i)}]
		The market satisfies NA.
		
		\item[\textup{(ii)}]
		For every stopping time $\sigma\le T$ there exists an ACLMM $\siQ$ for the market $(\siB,\siS)$ with
		$$
		\siQ\sim\P\text{ on }\cF_\sigma,
		$$
		where $\siB\triangleq(\Omega, \cF,(\sicF_t)_{t\geq 0},\P)$,
		$\sicF_t\triangleq\cF_{\sigma+t}$,
		$\siS\triangleq(\siS_t)_{t\geq 0}$,
		$\siS_t\triangleq\S_{\sigma+t}$.
	\end{enumerate}
\end{theorem}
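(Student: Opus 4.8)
\emph{Approach.}
The plan is to prove the two implications separately; ``\textup{(ii)}~$\Rightarrow$~\textup{(i)}'' is the substantial converse direction, while ``\textup{(i)}~$\Rightarrow$~\textup{(ii)}'' refines Theorem~\ref{theo: FTAP NA}. A preliminary common to both is a \emph{hereditary property} of NA: if $(\mathbb B,\S)$ satisfies NA, then so does the shifted market $(\siB,\siS)$ for every stopping time $\sigma\le T$, and in fact so does the market obtained from $(\siB,\siS)$ by restricting all trading strategies to a fixed event $N\in\cF_\sigma$ (i.e.\ replacing $\wt H$ by $\1_N\wt H$, which remains $(\sicF_t)_{t\ge0}$-predictable). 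Indeed, such a restricted shifted strategy lifts back to the $(\cF_t)_{t\ge0}$-predictable strategy $t\mapsto\1_N\wt H_{(t-\sigma)^{+}}\1_{\{t>\sigma\}}\in L(\S)$, whose value process is $\1_N V^{\wt H}_{(\cdot-\sigma)^{+}}$, so that admissibility and conditions \textup{(ii)--(iii)} of Definition~\ref{def:100224a1} are inherited; the same statement then holds with $\P$ replaced by the conditional measure $\P(\,\cdot\mid N)$ whenever $\P(N)>0$.

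\emph{Proof of ``\textup{(i)}~$\Rightarrow$~\textup{(ii)}''.}
Fix a stopping time $\sigma\le T$. By the hereditary property and Theorem~\ref{theo: FTAP NA} there is an ACLMM for $(\siB,\siS)$; the only thing left is to upgrade absolute continuity to equivalence on $\cF_\sigma$, which I would do by exhaustion. Let $\mathcal{G}$ be the family of all $G\in\cF_\sigma$ such that some ACLMM $\Q$ for $(\siB,\siS)$ has $\E[\,d\Q/d\P\mid\cF_\sigma]>0$ $\P$-a.s.\ on $G$. Since countable convex combinations of ACLMMs for $(\siB,\siS)$ are again ACLMMs for $(\siB,\siS)$ --- for the continuous process $\siS$ one may localise with the measure-independent times $\inf\{t\colon|\siS_t|\ge k\}$, so a common localising sequence is available --- the family $\mathcal{G}$ is stable under countable unions and therefore possesses a maximal element $G^{*}$, witnessed by some ACLMM $\siQ$. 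If $\P(\Omega\setminus G^{*})>0$, then on $N\triangleq\Omega\setminus G^{*}\in\cF_\sigma$ the restricted shifted market satisfies NA under $\P(\,\cdot\mid N)$ (hereditary property), so Theorem~\ref{theo: FTAP NA} supplies an ACLMM $\Q_1$ for it; then $\Q_1\ll\P$ and $\Q_1(N)=1$, whence (since $N\in\cF_\sigma$) $\E[\,d\Q_1/d\P\mid\cF_\sigma]$ is strictly positive on a subset of $N$ of positive $\P$-probability, so mixing $\Q_1$ into $\siQ$ strictly enlarges $G^{*}$ --- a contradiction. Hence $G^{*}=\Omega$ $\P$-a.s., i.e.\ $\siQ\sim\P$ on $\cF_\sigma$.

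\emph{Proof of ``\textup{(ii)}~$\Rightarrow$~\textup{(i)}''.}
Suppose NA fails and let $H$ realize arbitrage; then $V^H$ is continuous, $V^H\ge-c$, $V^H_0=0$, $V^H_\infty=V^H_T\ge0$ $\P$-a.s.\ (as $\S=\S_{\cdot\wedge T}$), and $\P(V^H_T>0)>0$. The strategy is to exhibit, in some shifted market $(\siB,\siS)$, an admissible strategy $\wt H$ with $V^{\wt H}_\infty\ge0$ $\P$-a.s.\ and $V^{\wt H}_\infty>0$ $\P$-a.s.\ on a set $A\in\cF_\sigma$ with $\P(A)>0$: by the supermartingale property of admissible value processes under the ACLMM $\siQ$ from \textup{(ii)} one gets $V^{\wt H}_\infty=0$ $\siQ$-a.s., while $\siQ\sim\P$ on $\cF_\sigma$ forces $\siQ(A)>0$, a contradiction. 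Let $c^{*}$ be the least constant with $V^H\ge-c^{*}$ $\P$-a.s. If $c^{*}>0$, fix $\varepsilon\in(0,c^{*})$, put $\sigma\triangleq\inf\{t\colon V^H_t<-c^{*}+\varepsilon\}\wedge T$ (so $\P(\sigma<T)>0$ and $V^H_\sigma=-c^{*}+\varepsilon$ on $\{\sigma<T\}$), and take $\wt H\triangleq\1_{\{\sigma<T\}}H_{\sigma+\cdot}$ and $A\triangleq\{\sigma<T\}\in\cF_\sigma$; then $\wt H$ is $\varepsilon$-admissible for $(\siB,\siS)$ with $V^{\wt H}_\infty=\1_{\{\sigma<T\}}(V^H_T+c^{*}-\varepsilon)\ge0$, strictly positive exactly on $A$. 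If $c^{*}=0$ (so $V^H\ge0$ $\P$-a.s.), I would instead argue that a nonnegative-value arbitrage entails an \emph{immediate arbitrage} at some stopping time $\sigma\le T$ with $\P(\sigma<T)>0$ --- a strategy supported on $(\sigma,\infty)$ whose value process is $\ge0$ and strictly positive on $(\sigma,T]$ on the event $\{\sigma<T\}$ --- which, transported to $(\siB,\siS)$, again furnishes the desired $\wt H$ with $A=\{\sigma<T\}$; here one relies on the Delbaen--Schachermayer structure theory for continuous semimartingales~\cite{DS1995}.

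\emph{Main obstacle.}
The non-routine points are, for ``\textup{(i)}~$\Rightarrow$~\textup{(ii)}'', the exhaustion step (notably that mixtures of ACLMMs are ACLMMs and that Theorem~\ref{theo: FTAP NA} survives conditioning on an event of $\cF_\sigma$), and, for ``\textup{(ii)}~$\Rightarrow$~\textup{(i)}'', the case $c^{*}=0$: the time one wants to restart at is essentially the last visit of $V^H$ to $0$ before $T$, which is merely a last-exit time, so the real work is to replace it by a genuine stopping time --- via an excursion/approximation argument or via the structure condition $dA\ll d\langle M\rangle$ for the canonical decomposition $\S=M+A$ --- without destroying the positive-probability event in $\cF_\sigma$.
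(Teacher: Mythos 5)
First, note that the paper does not prove this statement at all: Theorem~\ref{th:210324a1} is quoted from Kabanov--Stricker \cite{KabStr2005} and Strasser \cite{Strasser2005}, so there is no in-paper argument to compare against; your proposal has to stand on its own. Your direction ``(i)~$\Rightarrow$~(ii)'' looks essentially sound: the hereditary/lifting lemma for shifted strategies restricted to an $\cF_\sigma$-event, the stability of ACLMMs of a \emph{continuous} process under countable mixtures (localising by $\inf\{t:|\siS_t|\ge k\}$), and the exhaustion argument do combine to upgrade the Delbaen--Schachermayer ACLMM to one equivalent to $\P$ on $\cF_\sigma$; this is a reasonable, fairly self-contained route to the necessity part, modulo the routine but non-trivial verifications you gloss over (predictability of the shifted integrands, the shift identity for stochastic integrals, and that the lifted integrand is in $L(\S)$ under $\P$ when integrability is only known under $\P(\cdot\mid N)$ -- all fine because the density $\1_N/\P(N)$ is $\sicF_0$-measurable).

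The genuine gap is in ``(ii)~$\Rightarrow$~(i)'', in the case $c^{*}=0$, and you have in effect conceded it. Your plan there rests on the claim that an arbitrage with nonnegative value process entails an \emph{immediate arbitrage} at some stopping time $\sigma$ with the positivity event lying in $\cF_\sigma$; this does not follow from \cite{DS1995} and is false in the critical regime. Indeed, writing $\S=M+A$, if $dA\ll d\langle M\rangle$ with density $\lambda$ and $\int_0^T\lambda^2\,d\langle M\rangle<\infty$ a.s., then $Z\triangleq\mathcal E(-\lambda\cdot M)>0$ and $ZV^H$ is a nonnegative local martingale started at $0$, hence $V^H\equiv0$; so a $c^{*}=0$ arbitrage forces either failure of the structure condition or $\int_0^T\lambda^2\,d\langle M\rangle=\infty$ with positive probability. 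In the latter case the mean--variance tradeoff may explode \emph{continuously} (the candidate density hits zero without the integral blowing up immediately after any stopping time), and then the Immediate Arbitrage Theorem of \cite{DS1995} yields \emph{no} immediate arbitrage at any stopping time -- yet this is exactly the situation one must handle, and the event on which the shifted strategy ends up strictly positive is not $\cF_\sigma$-measurable for the obvious candidates for $\sigma$ (as you note, the ``right'' restart time is a last-exit time). Producing a stopping time $\sigma$ and an $\cF_\sigma$-event carrying the positivity is precisely the hard content of the Kabanov--Stricker/Strasser results, and your proposal only names possible tools (excursion approximation, structure condition) without an argument; as written, the sufficiency direction is therefore not proved.
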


Theorem~\ref{th:210324a1} explains that NA entails much more than only the existence of an ACLMM, namely the existence of a whole family of ACLMMs that is associated to shifted market models. Further, the result shows that this stronger condition is equivalent to NA. The question remains open whether for certain market models the sole existence of an ACLMM could also be sufficient for the NA condition. In the following section we introduce a diffusion framework where NA is often equivalent to the existence of an ACLMM. We will also identify the precise additional properties that are needed for this equivalence to hold.

\section{NA and ACLMMs in canonical general diffusion models}\label{sec: main}
First, we introduce our diffusion market model and afterwards we present our main results. 
In this section, we consider the canonical diffusion setting on the path space (cf., e.g., \cite[Definitions V.45.1,~V.45.2]{RW2}).
In Section~\ref{sec: main2} below, we show that our main results, essentially, remain valid also in the (more general) non-canonical setting.
See Discussion~\ref{disc:091024a1} for a precise statement.

\subsection{The Setup}
Let \(J \subset \bR\) be a bounded or unbounded, closed, open or half-open interval.
The interior of $J$ is denoted by $J^\circ$.
Let \(\Omega\triangleq C(\mathbb R_+; \bR)\) be the space of continuous functions \(\mathbb{R}_+ \to \bR\). The coordinate process on \(\Omega\) is denoted by \(\X\), i.e., \(\X_t (\omega) = \omega(t)\) for \(t \in \mathbb{R}_+\) and \(\omega \in \Omega\). 
We also set \(\mathcal{F} \triangleq \sigma (\X_s, s \geq 0)\) and \(\mathcal{F}_t \triangleq \bigcap_{s > t}\sigma (\X_r, r \leq s)\) for all~\(t \in \mathbb{R}_+\). 

A map \((J \ni x \mapsto \P_x)\) from \(J\) into the set of probability measures on \((\Omega, \mathcal{F})\), is called a
\emph{regular continuous strong Markov proccess} or a
\emph{general diffusion}
(with state space $J$) if the following hold:
\begin{enumerate}
	\item[\textup{(i)}] $\P_x(\X_0 = x) = \P_x(C(\mathbb R_+;J))=1$ for all $x\in J$;
	\item[\textup{(ii)}] the map \(x \mapsto \P_x(A)\) is measurable for all \(A \in \mathcal{F}\);
	\item[\textup{(iii)}] the \emph{strong Markov property} holds, i.e., for any stopping time \(\tau\) and any \(x \in J\), the kernel \(\P_{\X_\tau}\) is the regular conditional \(\P_x\)-distribution of \((\X_{t + \tau})_{t \geq 0}\) given \(\mathcal{F}_{\tau}\) on \(\{\tau < \infty\}\), i.e., \(\P_x\)-a.s. on~\(\{\tau < \infty\}\)
	\[  
	\P_x ( \X_{\cdot + \tau} \in d\omega \mid \mathcal{F}_\tau) = \P_{\X_\tau} (d \omega);
	\]
	\item[\textup{(iv)}] the diffusion is \emph{regular}, i.e., for all $x\in J^\circ$ and $y\in J$,
	\[
	\P_x(T_y<\infty)>0,
	\]
	where $T_y\triangleq\inf\{t\ge0:\X_t=y\}$ (with the usual convention $\inf\emptyset\triangleq\infty$).
\end{enumerate} 

It is well-known that a general diffusion \((x \mapsto \P_x)\) is characterized by two deterministic objects, the scale function \(\s\) and the speed measure \(\m\). The former is a strictly increasing continuous function from \(J\) into \(\bR\), while the latter is a measure on \((J, \mathcal{B} (J))\) that
satisfies \(\m ([a, b]) \in (0, \infty)\) for all \(a, b \in J^\circ\) with \(a < b\). We call the pair \((\s, \m)\) the \emph{characteristics of the general diffusion}. For precise definitions and more details on these concepts we refer to the seminal monograph \cite{itokean74} by K.~It\^o and H.~P.~McKean. More gentle introductions can be found in \cite{breiman1968probability,freedman,kallenberg,RY,RW2}. For a condensed overview we refer either to Chapter~2 of the book \cite{borodin_salminen} by Borodin and Salminen or to Section~2.2 of our previous paper \cite{criensurusov22}.

The class of general diffusions is very rich as the following examples illustrate.

\begin{example}[SDEs under the Engelbert--Schmidt conditions] \label{ex: SDE}
	Let \(\Y\) be a (possibly explosive)
	unique in law weak
	solution to the SDE
	\[
	d \Y_t = \mu (\Y_t) dt + \sigma (\Y_t) d W_t, \quad \Y_0 = x_0, 
	\]
	where \(\mu \colon J^\circ \to \bR\) and \(\sigma \colon J^\circ \to \bR\)
	are Borel functions
	satisfying the Engelbert--Schmidt conditions 
	\[
	\forall x\in J^\circ\colon
	\sigma(x)\ne0
	\qquad\text{and}\qquad
	\frac{1+|\mu|}{\sigma^2} \in L^1_\textup{loc} (J^\circ),
	\]
	and the accessible boundary points are stipulated to be absorbing.
	We refer to the paper \cite{engel_schmidt_3} or the monograph \cite[Chapter 5.5]{KaraShre} for a detailed discussion of SDEs under the Engelbert--Schmidt conditions. 
	
	It is well-known (\cite[Corollary 4.23]{engel_schmidt_3}) that SDEs under the Engelbert--Schmidt conditions give rise to a regular strong Markov family (possibly with a state space \(J \subset [- \infty, + \infty]\)) with scale function
	\begin{align*}
		\s (x) = \int^x \exp \Big\{ - \int^y \frac{2 \mu (z)}{\sigma^2 (z)} \, dz \, \Big\} \, dy, \quad x \in J^\circ,
	\end{align*} 
	and speed measure 
	\[
	\m (dx) = \frac{dx}{\s' (x) \sigma^2 (x)} \text{ on } \mathcal{B}(J^\circ).
	\]
	At this point, we stress that \(\Y\) is not necessarily a semimartingale. The simplest possible problem is that \(\Y\) may reach \(\infty\) or \(- \infty\) in finite time with positive probability.
	We refer to \cite{MU2015} for a thorough discussion of the semimartingale property of explosive solutions to~SDEs.
\end{example}

\begin{example}[Sticky Brownian motion] \label{ex: sticky BM}
	Another interesting class of diffusions are SDEs with stickiness. The most prominent example is the sticky Brownian motion\footnote{More precisely,
		Brownian motion with state space $\bR$ and sticky at zero.},
	which is a solution \(\Y\) to the (unique in law) system 
	\[
	d \Y_t = \1_{\{ \Y_t \not = 0\}} d W_t, \quad \1_{\{\Y_t = 0\}} dt = \rho\, d L^0_t (\Y),
	\]
	where \(\rho > 0\) is a so-called stickiness parameter and \(L^0 (\Y)\) is the semimartingale local time of the solution \(\Y\) in zero. For a discussion of this representation we refer to the paper \cite{EngPes}.
	
	The sticky Brownian motion is a general diffusion on natural scale, i.e., with the identity as scale function,
	with state space $\bR$
	and speed measure 
	\[
	\m (dx) = dx  + \rho \, \delta_0 (dx).	
	\]
	At this point, we notice that the sticky Brownian motion cannot be realized as a solution to an SDE like in Example~\ref{ex: SDE}, as the speed measure in Example~\ref{ex: SDE} is always absolutely continuous w.r.t. the Lebesgue measure.
	It is clear that one can also consider diffusions with more than one 
	sticky point.
\end{example}

\begin{example}[General diffusion market with skewness] \label{ex: skew}
	Another interesting class are diffusions with skewness. The most basic example is the skew Brownian motion, which is a solution \(\Y\) of the (unique in law) equation
	\[
	d \Y_t = d W_t + (2 \alpha - 1) d \ell^0_t (\Y), 
	\] 
	where \(\alpha \in (0,1) \setminus \{\frac{1}{2}\}\) is the so-called skewness parameter and \(\ell^0 (\Y)\) is the symmetric semimartingale local time of \(\Y\) in zero. 
	
	It is well-known
	(see \cite{HS}, \cite[Appendix 1.12]{borodin_salminen} or \cite[Exericse~X.2.24]{RY})
	that \(\Y\) is a general diffusion with state space \(\bR\), scale function 
	\[
	\s (x) = \begin{cases} (1 - \alpha) x, & x \geq 0, \\ \alpha x, & x < 0,\end{cases} 
	\] 
	and speed measure
	\[
	\m (dx) = \frac{dx}{v_\alpha (x)}
	\qquad\text{with}\qquad
	v_\alpha (x) = \begin{cases} 1 - \alpha, & x \geq 0, \\ \alpha, & x < 0.\end{cases}
	\]
	As the scale function is not continuously differentiable, the skew Brownian motion cannot be realized as a solution to an SDE as in Example~\ref{ex: SDE}.
\end{example}

\begin{setting}
Our inputs are a state space $J$, a scale function $\s$ and a speed measure $\m$.
They define a general diffusion $(J\ni x\mapsto\P_x)$.
The stochastic basis \(\mathbb{B}_{x_0} \triangleq (\Omega, \cF, (\cF_t)_{t \geq 0}, \P_{x_0})\) with some $x_0\in J^\circ$ serves as our underlying setup. 
\end{setting}

In this section,
we work under the following standing assumption.

\begin{SA} \label{SA: 1}
For all (equivalently, for some) \(x_0 \in J^\circ\), the coordinate process \(\X\) is a semimartingale on the stochastic basis \(\mathbb{B}_{x_0}\).
\end{SA}

\begin{remark} \label{rem: dc function SA}
It is well-known that the semimartingale property of $\X$ is merely a condition on the inverse scale function \(\s^{-1}\). In particular, it is independent of the starting value (when restricted to the interior of the state space).\footnote{Of course, when started in an absorbing boundary point, the coordinate process is constant and always a semimartingale.} We frequently use the fact that the semimartingale property of \(\X\) implies that the inverse scale function \(\s^{-1}|_{\s(J^\circ)}\) is a so-called \emph{dc function}, i.e., admits a representation as difference of two convex functions.
In case \(\s(J) = \bR\), this property is even necessary and sufficient. 
We refer to \cite[Section~5]{CinJPrSha} for more details. 
\end{remark}

\subsection{Main results}\label{sec:setting}
The following theorem reveals the precise connection between NA and the existence of an ACLMM for the financial market model introduced in the previous subsection.
We consider a finite or infinite time horizon \(T \in (0, \infty]\) and we use the notation \(l \triangleq \inf J\) and \(r \triangleq \sup J\).

\begin{theorem} \label{theo: main1}
Fix any $T\in(0,\infty]$.
The following are equivalent:
\begin{enumerate}
	\item[\textup{(a)}] NA holds for the market \((\mathbb{B}_{x_0}, \X_{\cdot \wedge T})\) for some \(x_0 \in J^\circ\).
			\item[\textup{(b)}] NA holds for the market \((\mathbb{B}_{x_0}, \X_{\cdot \wedge T})\) for all \(x_0 \in J^\circ\).
					\item[\textup{(c)}] 
							\begin{enumerate} [label=\textup{(c\arabic*)},ref=\textup{(c\arabic*)}]
							\item \label{c1} For some \(x_0 \in J^\circ\), there exists an ACLMM for the market \((\mathbb{B}_{x_0}, \X_{\cdot \wedge T})\).
							\item \label{c2} The scale function $\s$ is a dc function on $J^\circ$.
							\item \label{c3} Every finite boundary point $b\in\{l,r\}\cap\bR$ is either inaccessible or absorbing for~$\X$.
						\end{enumerate}
										\item[\textup{(d)}] 
						\begin{enumerate} [label=\textup{(d\arabic*)},ref=\textup{(d\arabic*)}]
							\item \label{d1} For all \(x_0 \in J^\circ\), there exists an ACLMM for the market \((\mathbb{B}_{x_0}, \X_{\cdot \wedge T})\).
							\item \label{d2} Every finite boundary point $b\in\{l,r\}\cap\bR$ is either inaccessible or absorbing for~$\X$.
						\end{enumerate}
\end{enumerate}
\end{theorem}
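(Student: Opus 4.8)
The plan is to run the cycle of implications \textup{(a)} \(\Rightarrow\) \textup{(c)} \(\Rightarrow\) \textup{(d)} \(\Rightarrow\) \textup{(b)} \(\Rightarrow\) \textup{(a)}, the last step being trivial since \(J^\circ \neq \emptyset\). The three substantial inputs I would use are: the deterministic characterisation of NA in terms of \((\s,\m,J)\) proved in \cite{criensurusov23}; Theorem~\ref{theo: FTAP NA} (NA \(\Rightarrow\) ACLMM); and Theorem~\ref{th:210324a1} (NA \(\Leftrightarrow\) the existence of a whole shifted family of ACLMMs). Throughout I would work in natural scale, writing \(\X = \s^{-1}(\Z)\) with \(\Z \triangleq \s(\X)\) the continuous local martingale attached to the diffusion, and keeping in mind from Remark~\ref{rem: dc function SA} that the Standing Assumption already makes \(\s^{-1}|_{\s(J^\circ)}\) a dc function, so that \(\X\) is an It\^o--Tanaka functional of \(\Z\).

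For \textup{(a)} \(\Rightarrow\) \textup{(c)}, fix \(x_0 \in J^\circ\) with NA for \((\mathbb{B}_{x_0}, \X_{\cdot\wedge T})\). Then \ref{c1} is Theorem~\ref{theo: FTAP NA}. For \ref{c3} I argue by contraposition: if a finite boundary point \(b \in \{l,r\}\) is accessible and not absorbing, it is (possibly sticky-)reflecting, so on the positive-probability event that \(b\) is reached before \(T\) the coordinate process accrues strictly positive local time \(L^b(\X)\) while being pushed into \(J^\circ\); the strategy trading \(\pm\1_{\{\X = b\}}\), with sign determined by whether \(b = l\) or \(b = r\), then has value process equal to a positive multiple of \(L^b(\X)_{\cdot\wedge T}\), which is nonnegative — hence admissible — and strictly positive with positive probability, contradicting NA. This is precisely the free-lunch-at-a-boundary mechanism already visible in Example~\ref{ex:290724a1}. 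For \ref{c2} I again argue by contraposition, invoking \cite{criensurusov23}: the deterministic condition characterising NA there contains the dc-property of \(\s\) as one of its clauses, because (given the Standing Assumption) failure of that property corresponds to a ``vertical tangent'' of \(\s\), equivalently \((\s^{-1})'\) hitting \(0\), at some point of \(J^\circ\) where \(\langle \X\rangle\) degenerates, the drift of \(\X\) ceases to be removable by a change of measure, and one can exhibit an arbitrage realised before \(T\).

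For \textup{(c)} \(\Rightarrow\) \textup{(d)}, only \ref{d1} needs an argument since \ref{d2} coincides with \ref{c3}. Granting \ref{c2} and \ref{c3}, the drift of the \(\P_{x_0}\)-semimartingale \(\X\) is — by It\^o--Tanaka applied to \(\s^{-1}\) together with the boundary classification — a local-time integral against the now locally finite curvature measure of \(\s^{-1}\), with no boundary term; whether it can be removed by an absolutely continuous change of measure is then equivalent to a deterministic condition on \((\s,\m)\) that does not mention the starting point, so the existence of an ACLMM from the single point \(x_0\) upgrades to existence from every \(y \in J^\circ\), i.e.\ \ref{d1}. For \textup{(d)} \(\Rightarrow\) \textup{(b)}, fix \(x_0 \in J^\circ\); for finite \(T\) I would verify condition~(ii) of Theorem~\ref{th:210324a1}, and for \(T = \infty\) the corresponding infinite-horizon criterion. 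By restriction, \ref{d1} furnishes, measurably in \(y \in J^\circ\) and in the remaining horizon \(s \in [0,T]\), an ACLMM for \((\mathbb{B}_y, \X_{\cdot\wedge s})\); given a stopping time \(\sigma \le T\) I then build \(\siQ\) by keeping \(\P_{x_0}\) on \(\cF_\sigma\) and, conditionally on \(\cF_\sigma\), letting the shifted path follow such an ACLMM started from \(\X_\sigma\) with remaining horizon \(T - \sigma\) on \(\{\X_\sigma \in J^\circ\}\), and follow \(\P_{\X_\sigma}\) on \(\{\X_\sigma \in \partial J\}\) — where, by \ref{d2}, \(\X_\sigma\) is absorbing, so \(\X\) is thereafter constant and already a local martingale. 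The strong Markov property gives \(\siQ \ll \P_{x_0}\); by construction \(\siQ = \P_{x_0}\) on \(\cF_\sigma = \sicF_0\); and a routine conditioning argument shows \((\X_{\sigma+t})_{t\ge0}\) is a \(\siQ\)-local martingale for \((\sicF_t)_{t\ge0}\). Theorem~\ref{th:210324a1} then yields NA, and as \(x_0\) was arbitrary, \textup{(b)} follows.

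The hard part will be the two appearances of the dc-condition \ref{c2}: showing in \textup{(a)} \(\Rightarrow\) \textup{(c)} that a non-dc scale function forces an arbitrage even on a finite horizon, and showing in \textup{(c)} \(\Rightarrow\) \textup{(d)} that the dc-property is exactly what turns ``an ACLMM exists from some starting point'' into ``from every starting point''. Both require the fine interplay — worked out in \cite{criensurusov23} — between the curvature measures of \(\s\) and \(\s^{-1}\), the speed measure \(\m\), the occupation density of \(\Z\), and the degeneracy of \(\langle\X\rangle\) at points where \(\s' = +\infty\), where the usual Girsanov calculus breaks down and one enters strict-local-martingale territory. One should also keep in mind that a single ACLMM never excludes arbitrage by itself (ACLMM is strictly weaker than NA, cf.\ Example~\ref{ex:290724a1}), so the detour through the full shifted family of Theorem~\ref{th:210324a1}, and through its infinite-horizon analogue, cannot be avoided; the latter is what makes the \(T = \infty\) bookkeeping nontrivial, even though for that horizon conditions \ref{c2} and \ref{c3} turn out to be superfluous.
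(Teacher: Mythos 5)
Your cycle \textup{(a)}~\(\Rightarrow\)~\textup{(c)}~\(\Rightarrow\)~\textup{(d)}~\(\Rightarrow\)~\textup{(b)}~\(\Rightarrow\)~\textup{(a)} differs from the paper's \textup{(a)}~\(\Rightarrow\)~\textup{(c)}~\(\Rightarrow\)~\textup{(b)}~\(\Rightarrow\)~\textup{(d)}~\(\Rightarrow\)~\textup{(a)}, and the difference is where the gap sits. Your step \textup{(c)}~\(\Rightarrow\)~\textup{(d)} is asserted rather than proved: the claim that, under \ref{c2} and \ref{c3}, ``whether the drift can be removed by an absolutely continuous change of measure is equivalent to a deterministic condition on \((\s,\m)\) that does not mention the starting point'' is precisely the nontrivial content of the theorem, not something you may invoke. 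That existence of an ACLMM can genuinely depend on the starting point is shown by Example~\ref{ex:031024a1} (ACLMM exists for every \(x_0\neq0\) but not for \(x_0=0\)); it is only the dc property that rules this out, and the mechanism is an identification argument you never make: because \(\s\) is dc, \(\tilde\m\triangleq\s'_+\,d\m\) is a valid speed measure, and \emph{any} ACLMM must coincide on \(\cF_{T-\varepsilon}\) with the law \(\tilde\P_{x_0}\) of the natural-scale diffusion with speed measure \(\tilde\m\) absorbed at accessible boundaries (\cite[Lemmas 5.10, 5.11]{criensurusov23}). Only after this uniqueness step does one have a concrete absolute continuity \(\tilde\P_{x_0}\ll\P_{x_0}\), which by the separating-times results \cite[Corollaries 2.15, 2.16]{criensurusov22} is equivalent to deterministic conditions on \((\s,\m)\) — and these are exactly the deterministic NA criteria of \cite[Theorems 3.9, 3.19]{criensurusov23}, independent of \(x_0\). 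This gives \textup{(b)} directly from \textup{(c)}; \textup{(d)} then follows trivially from \textup{(b)} via Theorem~\ref{theo: FTAP NA} and \cite{criensurusov23}, and \textup{(d)}~\(\Rightarrow\)~\textup{(a)} is handled by \cite[Lemma 5.12]{criensurusov23} (ACLMMs for \emph{all} starting points force \(\s\) to be dc), reducing to \textup{(c)}. Without the identification of the candidate ACLMM your argument is circular. (Also note the paper treats the natural-scale case separately in \textup{(c)}~\(\Rightarrow\)~\textup{(b)}, where \(\P_{x_0}\) itself is an ELMM by \ref{c3}.)

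Your step \textup{(d)}~\(\Rightarrow\)~\textup{(b)} via Theorem~\ref{th:210324a1} has two further problems. First, \ref{d1} only provides, for each \(y\in J^\circ\), \emph{some} ACLMM \(\Q_y\); your pasting construction of \(\siQ\) needs a jointly measurable kernel \(y\mapsto\Q_y\) (and measurability in the random remaining horizon \(T-\sigma\)), which is a nontrivial measurable-selection issue you do not address — the paper avoids it entirely because it never needs shifted ACLMMs, deriving NA from the deterministic criteria instead. Second, Theorem~\ref{th:210324a1} is stated (and is used in the cited literature) for a finite time horizon; for \(T=\infty\) you appeal to ``the corresponding infinite-horizon criterion,'' which is not available in the paper, whereas the paper handles \(T=\infty\) through \cite[Theorem 3.19]{criensurusov23}. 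Your direct local-time arbitrage sketch for \textup{(a)}~\(\Rightarrow\)~\ref{c3} and the appeal to \cite{criensurusov23} for \ref{c2} are consistent with the paper's citations and are not the issue; the missing ideas are the dc-based identification of the ACLMM with \(\tilde\P_{x_0}\) and the use of deterministic (starting-point-free) criteria in place of the shifted-family machinery.
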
 

It is instructive to emphasize the difference between conditions (c) and~(d): we do not need to state that $\s$ is a dc function when we have the existence of an ACLMM \emph{for all} starting points $x_0\in J^\circ$, whereas we need this regularity condition on $\s$ when the existence of an ACLMM is known only \emph{for some} $x_0\in J^\circ$.
Below we show that, in the case of a finite time horizon $T<\infty$, the formulation of Theorem~\ref{theo: main1} is sharp.
In contrast, for the infinite time horizon \(T = \infty\), we can sharpen Theorem~\ref{theo: main1} by removing (c2) and~(c3) from~(c) (and, consequently, by removing (d2) from~(d)), which is stated in Theorem~\ref{theo: main2} below.
Altogether, the difference between the Theorems~\ref{theo: main1} and~\ref{theo: main2} illustrates how delicate the whole picture truly is.

\begin{theorem} \label{theo: main2}
Fix any $x_0\in J^\circ$.
The following are equivalent: 
\begin{enumerate}
\item[\textup{(i)}]
NA holds for the market \((\mathbb{B}_{x_0}, \X)\).
\item[\textup{(ii)}]
There exists an ACLMM for the market \((\mathbb{B}_{x_0}, \X)\).
\end{enumerate}
Moreover, if the equivalent conditions \textup{(i)}--\textup{(ii)} hold for some $x_0\in J^\circ$, then they hold for all $x_0\in J^\circ$.
\end{theorem}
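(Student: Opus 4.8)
The plan is to deduce Theorem~\ref{theo: main2} from Theorem~\ref{theo: main1} applied with $T=\infty$. Since condition~(a) of Theorem~\ref{theo: main1} (where $\X_{\cdot\wedge T}=\X$) holds for some $x_0\in J^\circ$ if and only if it holds for all of them (equivalence (a)$\Leftrightarrow$(b)), condition~(i) above is equivalent to~(a), while (ii) is condition~\ref{c1} for our fixed $x_0$. Hence ``(i)$\Rightarrow$(ii)'' follows from the implications (a)$\Rightarrow$(d)$\Rightarrow$\ref{d1} of Theorem~\ref{theo: main1}, and the concluding (``moreover'') assertion follows by combining ``(i)$\Rightarrow$(ii)'' with (a)$\Leftrightarrow$(b). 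It remains to prove ``(ii)$\Rightarrow$(i)'', and by the implication (c)$\Rightarrow$(a) of Theorem~\ref{theo: main1} it suffices to show that, \emph{for $T=\infty$, the existence of an ACLMM for $(\mathbb B_{x_0},\X)$ forces $\s$ to be a dc function on $J^\circ$ and forces every finite boundary point of $\X$ to be inaccessible or absorbing} --- that is, it forces conditions~\ref{c2} and~\ref{c3}. I would prove both of these by contraposition, using one common observation about the infinite horizon.

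Let $\Q$ be an ACLMM for $(\mathbb B_{x_0},\X)$ on $[0,\infty)$, with density process $Z=1+N$ relative to $\P_{x_0}$, where $N$ is a continuous local martingale. Since $\Q\ll\P_{x_0}$, the process $Z$ is $\Q$-a.s.\ strictly positive on $[0,\infty)$, and $\langle N\rangle_t<\infty$ for every $t$ ($N$ being continuous); by the Girsanov theorem and the Kunita--Watanabe inequality, the finite-variation part $A$ in the $\P_{x_0}$-canonical decomposition $\X=x_0+M+A$ is then $\Q$-a.s.\ absolutely continuous with respect to $d\langle\X\rangle=d\langle M\rangle$, say $dA_t=b_t\,d\langle\X\rangle_t$, and moreover $\int_0^t b_s^2\,d\langle\X\rangle_s<\infty$ $\Q$-a.s.\ for every~$t$. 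Suppose now, on the one hand, that some finite boundary point --- say $r=\sup J<\infty$, the point $l$ being symmetric --- is accessible but not absorbing. Then $r\in J$, and by the strong Markov property of $\X$ under $\P_{x_0}$ the process accumulates, immediately after $T_r$, a strictly positive reflection local time at $r$ on the event $\{T_r<\infty\}$, which enters $A$ as a component singular with respect to $d\langle\X\rangle$; the preceding sentence therefore forces $\Q(T_r<\infty)=0$. Suppose, on the other hand, that $\s$ is not a dc function on $J^\circ$. Since $\s^{-1}$ is nevertheless a dc function (Remark~\ref{rem: dc function SA}), it is locally Lipschitz, so $\s$ has derivative bounded away from $0$ locally, whence the failure of $\s$ to be dc must be a failure of local Lipschitzness: there is a point $x^\ast\in J^\circ$ near which the one-sided derivative of $\s^{-1}$ tends to~$0$ (a ``vertical tangent'' of $\s$). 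Expanding $\X=\s^{-1}(\s(\X))$ by the It\^o--Tanaka formula (legitimate since $\s^{-1}$ is dc and $\s(\X)$ is a $\P_{x_0}$-local martingale), one checks --- via the occupation-times formula and the continuity and positivity of the local times of $\s(\X)$ at the interior levels it visits --- that $b$ blows up near $x^\ast$ so strongly that $\int_0^t b_s^2\,d\langle\X\rangle_s=\infty$ whenever $\X$ has visited $x^\ast$ by time~$t$; the preceding sentence therefore forces $\Q(\{\X\text{ ever visits }x^\ast\})=0$.

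In either case $\Q$ is forced to avoid a specific point --- an accessible finite boundary point, or an interior point. Now a regular diffusion started at $x_0\in J^\circ$ is, over the infinite horizon, either recurrent (visiting every interior point and every accessible boundary point $\P_{x_0}$-a.s.) or transient towards a single boundary point of $J$ which it then approaches $\P_{x_0}$-a.s., and we derive a contradiction from this dichotomy. In the recurrent case --- and in the transient case in which the diffusion still reaches the avoided point $\P_{x_0}$-a.s.\ on its way to the boundary --- the event that $\Q$ avoids is $\P_{x_0}$-a.s.\ certain, which contradicts $\Q\ll\P_{x_0}$. In the remaining case the diffusion is transient towards a boundary point $b$ of $J$ lying on the side of $x_0$ \emph{opposite} the avoided point, so that $\P_{x_0}$-a.s., hence $\Q$-a.s., $\X_t\to b$ as $t\to\infty$. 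But under $\Q$ the process $\X$ avoids the point in question and thus stays on one side of it, so, for a suitable constant $c$ and sign, $c\pm\X$ is a nonnegative $\Q$-local martingale; hence $\X$ converges $\Q$-a.s.\ to a \emph{finite} limit. If $b=\pm\infty$ this already contradicts $\X_t\to b$; if $b\in\bR$ then $\X$ is in fact a bounded $\Q$-martingale, so $\E_\Q[\lim_{t\to\infty}\X_t]=x_0$, contradicting $\X_t\to b\neq x_0$. In all cases a contradiction results, which completes the proof. The main obstacle is the second paragraph: establishing rigorously the estimate $\int_0^t b_s^2\,d\langle\X\rangle_s=\infty$ near a vertical tangent of $\s$ (and the analogous estimate coming from the reflection local time at an accessible, non-absorbing finite boundary), and then handling the recurrence/transience dichotomy for the $\P_{x_0}$-diffusion --- governed by $(\s,\m)$, as in~\cite{criensurusov23} --- carefully enough to be certain that, over the infinite horizon, no absolutely continuous change of measure can evade the obstruction. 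It is precisely at this step that the infinite horizon is used, and this is what distinguishes Theorem~\ref{theo: main2} from Theorem~\ref{theo: main1}.
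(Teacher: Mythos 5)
Your reduction to Theorem~\ref{theo: main1} is the same as the paper's, and your necessary ``structure condition'' under an ACLMM (via Girsanov and Kunita--Watanabe: \(\Q\)-a.s.\ \(dA\ll d\langle \X,\X\rangle\) with square-integrable density against \(d\langle \X,\X\rangle\)) is sound, modulo the unjustified and unnecessary claim that the density process is continuous. But the heart of your argument --- that failure of (c2) or (c3) violates this structure condition as soon as the bad point is visited --- is asserted rather than proved, and as stated it has real gaps. First, from the fact that \(\q=\s^{-1}\) is dc while \(\s\) is not, you only get that \(\q'_-\) is driven to \(0\) near some level; this can happen through the absolutely continuous part of \(\q''\) \emph{or} through its atoms/singular part. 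Your blow-up estimate ``\(\int_0^t b_s^2\,d\langle \X,\X\rangle_s=\infty\)'' presupposes that the drift \(A=\frac12\int L^y(\s(\X))\,\q''(dy)\) is absolutely continuous with respect to \(d\langle \X,\X\rangle\) with a density \(b\); in the singular case there is no such \(b\) and a different (singularity-of-drift) argument is needed, which you only invoke for reflecting boundaries. Even in the absolutely continuous case, the divergence of \(\int L^y\,(\q''/\q'_-)^2\,dy\) near the bad level requires a genuine argument (a Cauchy--Schwarz/logarithmic-variation estimate on \(\q'\) plus positivity and spatial continuity of the local times of \(\s(\X)\) strictly after the hitting time --- note it can fail at the hitting time itself, where the nearby local times vanish); you flag exactly this as ``the main obstacle,'' i.e.\ the proof of the key step is missing. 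Second, your recurrence/transience dichotomy is false as stated: a transient regular diffusion need not approach a single boundary point \(\P_{x_0}\)-a.s.\ (the limit can be random between the two boundaries), and at this stage reflecting boundaries are not yet excluded, so ``\(\P_{x_0}\)-a.s., hence \(\Q\)-a.s., \(\X_t\to b\)'' is unjustified as written. The contradiction can likely be repaired by arguing on the \(\Q\)-full event that the bad point is never visited (on which the \(\P_{x_0}\)-paths must converge to the opposite boundary), but that more careful case analysis is exactly what your sketch omits.

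For comparison, the paper proves (ii)\(\Rightarrow\)(i) by an entirely different, global route: it shows \(\langle \X,\X\rangle\) is strictly increasing up to the exit time \(\zeta\), performs the Dambis--Dubins--Schwarz time change under \(\Q\) to obtain a \(\Q\)-Brownian motion \(B\) absorbed at \(\inf J\) and \(\sup J\) (the identification of the lifetime \(\zeta(B)=\langle \X,\X\rangle_\zeta\) transfers from \(\P_{x_0}\) to \(\Q\) by absolute continuity --- this is where the infinite horizon enters), deduces that \(\s(B)\) is a \(\Q\)-semimartingale and hence that \(\s\) is dc by the \c{C}inlar--Jacod--Protter--Sharpe criterion, and finally identifies \(\Q\) with the natural-scale diffusion \(\tP_{x_0}\) via \cite[Lemma 5.11]{criensurusov23} and reads off the boundary behavior from the separating-times result \cite[Corollary 2.16]{criensurusov22}. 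Your local Girsanov/structure-condition strategy is plausible and genuinely different, but in its present form the decisive estimates and the transience case analysis are missing, so it does not yet constitute a proof.
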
 

The above theorems illustrate how close NA and the existence of an ACLMM are in our general diffusion framework. At least for the infinite time horizon, it is always equivalent to NA. Furthermore, as in many situations of interest (e.g., in the SDE setting of Example~\ref{ex: SDE}) the scale function \(\s\) is a priori a dc function and accessible boundaries are stipulated to be absorbing, we thus identify many natural subframeworks of interest where, irrespective of the time horizon, the existence of an ACLMM is equivalent to NA.

\begin{remark}
A comparison of Theorems \ref{theo: FTAP NA} and~\ref{th:210324a1} reveals that, in any financial market~$(\mathbb B,\S)$, NA implies the existence of an ACLMM $\Q$ with the additional property $\Q\sim\P$ on $\cF_0$.
In our general diffusion framework, this additional property comes automatically
due to Blumenthal's zero-one law. Indeed, if \(\Q\) is an ACLMM for the market \((\mathbb{B}_{x_0}, \X_{\cdot \wedge T})\), the absolute continuity $\Q\ll\P_{x_0}$ and Blumenthal's zero-one law even yield that \(\Q=\P_{x_0}\) on \(\cF_0\).
\end{remark}

\begin{proof}[Proof of Theorem~\ref{theo: main1}]
	We establish the following path of implications:
	\begin{align*}
		\textup{(a)} \Longrightarrow \textup{(c)} \Longrightarrow \textup{(b)} \Longrightarrow \textup{(d)} \Longrightarrow \textup{(a)}.
	\end{align*}
	First, (a)~\(\Longrightarrow\)~(c1) follows from Theorem~\ref{theo: FTAP NA} and (a)~\(\Longrightarrow\) (c2) + (c3) is due to \cite[Theorem 3.9]{criensurusov23}.
In the case $T=\infty$, we only need to observe that NA for $(\mathbb B_{x_0},X)$ implies NA for $(\mathbb B_{x_0},X_{\cdot\wedge T'})$, with any $T'\in(0,\infty)$, and to apply \cite[Theorem 3.9]{criensurusov23} to
$(\mathbb B_{x_0},X_{\cdot\wedge T'})$.\footnote{This approach seems to be somewhat indirect.
We mention that, in the case $T=\infty$, the direct way would be to apply \cite[Theorem 3.19]{criensurusov23}, which deals with the infinite-horizon case.
Contrary to \cite[Theorem 3.9]{criensurusov23}, it is assumed in \cite[Theorem 3.19]{criensurusov23} that $X$ is \emph{not} on natural scale.
Therefore, on this way, the case of natural scale needs to be treated separately.
While the latter is, of course, not a problem, this requires a bit more explanation.
That is why we try to apply the finite-horizon result \cite[Theorem 3.9]{criensurusov23} whenever it is possible.}
This establishes the first implication (a)~\(\Longrightarrow\)~(c). 

\smallskip 
Next, we assume that (c) holds.
If $X$ is on natural scale, then, due to~(c3), for any $x_0\in J^\circ$, $X$ is a $\P_{x_0}$-local martingale, hence $\P_{x_0}$ itself is an \emph{equivalent} local martingale measure for the market $(\mathbb B_{x_0},X_{\cdot\wedge T})$.
It follows that $(\mathbb B_{x_0},X_{\cdot\wedge T})$ satisfies NFLVR, let alone NA.
Thus, if $X$ is on natural scale, then (b) holds.
Now, assume that $X$ is not on natural scale.
By (c1), there exists an \(x_0 \in J^\circ\) and an ACLMM \(\Q_{x_0}\) for the market \((\mathbb{B}_{x_0}, \X_{\cdot \wedge T})\). Because \(\s\) is a dc function by (c2), the right-derivative 
	\[
	\s'_+ (x) \triangleq \lim_{h \searrow 0} \, \frac{\s (x + h) - \s (x)}{h}, \quad x \in J^\circ, 
	\]
	is well-defined and
	\(d\tilde{\m} \triangleq \s'_+ d \m\) is a valid speed measure on \(J^\circ\) by \cite[Lemma 5.10]{criensurusov23}. In particular, there exists a diffusion \((x \mapsto \tilde{\P}_{x})\) on natural scale with speed measure \(\tilde{\m}\) that is absorbed in the boundaries of \(J\) whenever they are accessible (which is a property of \(\tilde{\m}|_{J^\circ}\)). 
	Again because \(\s\) is a dc function, we can apply \cite[Lemma 5.11]{criensurusov23} and conclude that
\(\Q_{x_0} = \tilde{\P}_{x_0}\) on $\mathcal F_{T-\varepsilon}$, for any $\varepsilon\in(0,T)$.
Notice that, if $T=\infty$, this implies that $\Q_{x_0} = \tilde{\P}_{x_0}$ on the whole \(\sigma\)-field $\mathcal F$.
Take an arbitrary $T'\in(0,T)$ in the case $T<\infty$ and $T'=\infty$ in the case $T=\infty$.
Now, the preceding discussion yields $\tP_{x_0}\ll\P_{x_0}$ on $\mathcal F_{T'}$.
Together with~(c3), this allows us to conclude from Corollary~2.15 (when $T<\infty$) resp.~2.16 (when $T=\infty$) in \cite{criensurusov22} that the deterministic conditions for NA from Theorem~3.9 (when $T<\infty$) resp.~3.19 (when $T=\infty$) in \cite{criensurusov23} are satisfied.

In particular, these conditions are independent of the initial value. As a consequence, (b) holds. 
	
	\smallskip 
	We now assume that (b) holds. Part (d1) follows from Theorem~\ref{theo: FTAP NA} and (d2) is due to \cite[Theorem 3.9]{criensurusov23}. Hence, (d) holds. 
	
	\smallskip 
	Finally, we assume that (d) holds and aim to conclude (a). Thanks to \cite[Lemma 5.12]{criensurusov23}, (d1) implies that \(\s\) is a dc function on $J^\circ$. Hence, (c) holds and consequently, also~(b). As (b)~\(\Longrightarrow\)~(a) is trivial, the proof is complete.
\end{proof}

\begin{proof}[Proof of Theorem~\ref{theo: main2}]
The implication (i) $\Longrightarrow$ (ii) follows from Theorem~\ref{theo: FTAP NA}.
The last claim follows from the equivalence (a) $\Longleftrightarrow$ (b) in Theorem~\ref{theo: main1}.
It remains to prove the implication (ii) $\Longrightarrow$ (i) for a fixed $x_0\in J^\circ$.
To this end, by Theorem~\ref{theo: main1}, it suffices to prove that (ii) entails that
\begin{itemize}
\item[(a)]
$\s$ is a dc function on $J^\circ$ and
\item[(b)]
every finite boundary point $b\in\{l,r\}\cap\bR$ is either inaccessible or absorbing for~$X$.
\end{itemize}
This is the program for the remainder of this proof.

\smallskip
We first set
\(
\zeta \triangleq \inf \{t \geq 0 \colon \X_t \notin J^\circ\}.
\)
Recalling that \(\q\triangleq\s^{-1}\) is a dc function on $\s(J^\circ)$ (see Remark~\ref{rem: dc function SA}) and that \(\s (X_{\cdot \wedge \zeta})\) is a local \(\P_{x_0}\)-martingale (\cite[Corollary~V.46.15]{RW2}), we can apply the generalized It\^o formula (\cite[Theorem~29.5]{kallenberg}) to \(X_{\cdot \wedge \zeta} = \s^{-1} (\s (X_{\cdot \wedge \zeta}))\) and obtain \(\P_{x_0}\)-a.s., for all \(t < \zeta\), 
	\[
	d \X_t = \q'_- (\s (\X_t)) d \s (\X_t) + \tfrac{1}{2}\, d\int L^x_t (\s (\X)) \q'' (dx), 
	\] 
	where \(\q'_-\) denotes the left hand derivative 
	\[
	\q_-' (x) = \lim_{h \nearrow 0} \frac{\q(x + h) - \q(x)}{h}, \quad x \in J^\circ, 
	\] 
	and \(\q''\) denotes the associated second derivative measure given by 
	\[
	\q'' ( [a, b)) = \q_-' (b) - \q'_- (a), \quad a, b \in J^\circ, \, a < b.
	\] 
	Consequently, \(\P_{x_0}\)-a.s., for all \(t < \zeta\), 
	\begin{align}\label{eq: formula}
	\langle \X, \X\rangle_t = \int_0^t \big( \q'_- (\X_s) \big)^2 d \langle \s (\X), \s (\X) \rangle_s.
	\end{align} 
	
In the following we argue that \(t \mapsto \langle \X, \X\rangle_t\) is \(\P_{x_0}\)-a.s. continuous and strictly increasing on \([0, \zeta)\). Continuity is clear as \(\X\) is a continuous \(\P_{x_0}\)-semimartingale. We discuss that the map is also strictly increasing. 
Notice that any nonempty open interval \(I \subset J^\circ\) contains another nonempty interval \(I' \subset I\) such that \(\q'_- > 0\) on \(I'\). Indeed, there exists a \(c \in I\) such that \(\q'_- (c) > 0\), as otherwise \(\q\) would be constant on \(I\), which is not the case (as it is strictly increasing). Then, by the left-continuity of \(\q'_-\), there exists some \(d \in I\), $d<c$, such that \(\q'_- > 0\) on \([d, c]\).
As \(\s (\X_{\cdot\wedge\zeta})\) is a local \(\P_{x_0}\)-martingale
that has \(\P_{x_0}\)-a.s. no interval of constancy (this is inherited from the corresponding property of Brownian motion and \cite[Theorem~33.9]{kallenberg}), we conclude from \cite[Exercise~3, p.~415]{kallenberg} that \([0, \zeta)\ni t \mapsto \langle \s (\X), \s (\X)\rangle_t\) is strictly increasing and consequently, in view of \eqref{eq: formula} and by the positivity property of \(\q'_-\) as discussed above, the same is true for \([0, \zeta) \ni t \mapsto \langle \X, \X \rangle_t\).

For a while, we work on the stochastic interval
$[0,\langle X,X\rangle_\zeta)$ and
define $\P_{x_0}$-a.s. continuous and strictly increasing time change $\tau$ and time-changed process $B$ by the formulas
\[
\tau_t \triangleq \inf \{s \geq 0 \colon \langle \X, \X \rangle_s > t \}, \quad t\in[0,\langle X,X\rangle_\zeta),
\]
and
$$
\B_t\triangleq \X_{\tau_t}, \quad t\in[0,\langle X,X\rangle_\zeta).
$$
Let \(\Q_{x_0}\) be an ACLMM for the market \((\mathbb{B}_{x_0}, \X)\).
By a variant of the Doeblin, Dambis, Dubins--Schwarz theorem (see \cite[Exercise V.1.18]{RY}), the process \(\B\) is a \(\Q_{x_0}\)-Brownian motion on the stochastic interval
$[0,\langle X,X\rangle_\zeta)$.
For the rest of this proof, when we work under $\Q_{x_0}$ (but not $\P_{x_0}$), we extend $B$ to the whole time axis $[0,\infty)$ by stopping it at $\langle X,X\rangle_\zeta$.

As discussed above, \([0, \zeta) \ni t \mapsto \langle \X, \X \rangle_t\) and \([0, \langle \X, \X\rangle_\zeta) \ni t \mapsto \tau_t\) are \(\P_{x_0}\)-a.s. inverse to each other.
Hence, \(\P_{x_0}\)-a.s., for all \(a < x_0 < b\) with \(a, b \in J^\circ\),
	\[
	\overline T_a(B) \wedge \overline T_b (B) = \langle \X, \X \rangle_{\overline T_a\wedge \overline T_b},
	\]
where, for $c\in J^\circ$, we use the notation
$$
\overline T_c\triangleq\inf\{t\in[0,\zeta) \colon X_t=c\}
\quad\text{and}\quad
\overline T_c(B)\triangleq\inf\{t\in[0,\langle X,X\rangle_\zeta) \colon B_t=c\}.
$$
Considering some sequences
$\{a_n,b_n:n\in\mathbb N\}\subset J^\circ$
with
\(a_n \searrow \inf J\) and \(b_n\nearrow \sup J\), we conclude that \(\P_{x_0}\)-a.s. 
	\[
	\zeta (B) \triangleq \lim_{n\to\infty}
	\big(\overline T_{a_n}(B) \wedge \overline T_{b_n} (B)\big)
	= \langle \X, \X\rangle_\zeta.
	\]
As \(\Q_{x_0}\ll \P_{x_0}\), this equality also holds \(\Q_{x_0}\)-a.s.
Recalling our previous insight that $B$ is a $\Q_{x_0}$-Brownian motion stopped at $\langle X,X\rangle_\zeta$, we now conclude that $B$ is a $\Q_{x_0}$-Brownian motion absorbed at $\inf J$ and $\sup J$.
	
	Because \(\s (\X_{\cdot\wedge\zeta})\) is a local \(\P_{x_0}\)-martingale, Jacod's theorem (\cite[Theorem~III.3.13]{JS}) yields that \(\s (\X_{\cdot\wedge\zeta})\) is a \(\Q_{x_0}\)-semimartingale and, by \cite[Theorem~10.16]{Jacod}, the time-changed process 
	\(
	\s (\X_{\tau_\cdot\wedge\zeta}) = \s (\B)
	\)
is also a \(\Q_{x_0}\)-semimartingale.
By an obvious adjustment of \cite[Theorem~5.9]{CinJPrSha}, we conclude that \(\s|_{J^\circ}\) is a dc function.

It remains to establish that finite boundary points are either inaccessible or absorbing.
As we already proved that $\s$ is a dc function on $J^\circ$, we can apply \cite[Lemma 5.11]{criensurusov23} and conclude that $\Q_{x_0}=\tP_{x_0}$,
where $(x\mapsto\tP_x)$ is a general diffusion on natural scale with speed measure $\s'_+d\m$ on $J^\circ$ that is absorbed in the boundaries of $J$ whenever they are accessible.
Thus, $\tP_{x_0}\ll\P_{x_0}$ and \cite[Corollary 2.16]{criensurusov22} yields that \(\tP_{x_0}\) and \(\P_{x_0}\) have the same boundary behavior.\footnote{In case \(\tP_{x_0} = \P_{x_0}\) this is clear, and when \(\tP_{x_0} \not = \P_{x_0}\) one can apply \cite[Corollary 2.16]{criensurusov22}.}
This completes the proof.
\end{proof}

Finally, we construct examples showing that neither of the conditions (c2) and~(c3) can be removed from part~(c) of Theorem~\ref{theo: main1}
(and, similarly, condition~(d2) cannot be dropped).
As we already know from Theorem~\ref{theo: main2}, for such examples, we necessarily need to consider the case of a finite time horizon.

We start with a short preparation.
Define the set $\mathbb D\triangleq\{2^{-n} \colon n\in\mathbb N\}$ and the function
\(\hat{\f} \colon (0, \infty) \to (0, \infty)\) by the formula
\[
\hat{\f} (x) \triangleq
\begin{cases}
\frac{|\log_2 (x)|}{x},&x\in\mathbb D,
\\[1mm]
\frac{1}{x|\log_2 (x)|^2},&x\in(0,1/2]\setminus\mathbb D,
\\[1.5mm]
\,2,&x\in(1/2,\infty).
\end{cases} 
\]
Let \(\f \colon (0,\infty) \to (0, \infty)\) be a smoothened version of \(\hat{\f}\) such that 
\begin{enumerate}
\item[(i)]
\(\f\) is smooth on \((0, \infty)\);

\item[(ii)]
$\f=\hat\f$ on $\mathbb D\cup(1/2,\infty)$ and
\begin{equation}\label{eq:031024a1}
\frac{1}{x|\log_2 (x)|^2}
\le\f(x)\le
\frac{|\log_2 (x)|}{x},\quad
x\in(0,1/2]\setminus\mathbb D;
\end{equation}

\item[(iii)]
$\int_0^\varepsilon \f(x)\,dx<\infty$ for some (equivalently, for all) $\varepsilon\in(0,\infty)$;

\item[(iv)]
in each interval \([2^{-n-1}, 2^{-n}]\), $n\in\mathbb N$,
the function \(\f\) first decreases and then increases.
\end{enumerate} 
In the following, \(T \in (0, \infty)\) is a finite time horizon.

\begin{example}[Condition~(c2) in Theorem~\ref{theo: main1} cannot be dropped]\label{ex:031024a1}
We take the interior of the state space \(J^\circ \triangleq \bR\)
and define the scale function \(\s \colon \bR \to \bR\) by 
\[
\s (x) \triangleq	\int_0^x \f \, (|y|) \, dy, \quad x \in \bR. 
\]
As, clearly, $\s(-\infty)=-\infty$ and $\s(\infty)=\infty$, the boundary points $\pm\infty$ will be inaccessible for any general diffusion with the scale function $\s$ and any valid speed measure on $\mathcal B(\bR)$, that is, we will also have $J=\bR$.
As speed measure we take
\[
\m (dx) \triangleq dx + \sum_{n = 1}^\infty \frac{1}{n^2}
\big(\delta_{2^{-n}} (dx) + \delta_{-2^{-n}} (dx)\big)
\text{ on }\mathcal B(\bR).
\]
It is easy to see that this is indeed a valid speed measure.
We now consider the general diffusion $(\bR\ni x\mapsto\P_x)$ with the scale function $\s$ and speed measure~$\m$.

Observe that \(\s\) is not a dc function on $\bR$, because \(\s' = \f\) on \((0, \infty)\) and \(\lim_{x \searrow 0} \f (x) = \infty\). However, its inverse \(\q\triangleq\s^{-1}\) is a dc function on $\bR$. Indeed, it is easy to see that \(\q\) is absolutely continuous with absolutely continuous derivative \(1 / \f (| \q |)\). To conclude that \(\q\) is even a dc function it suffices to understand that \(1 / \f (|\q|)\) is locally of finite variation. Clearly, this is the case on \(\bR \setminus \{0\}\).
Furthermore, the lower bound in~\eqref{eq:031024a1}, item~(iv) preceding this example and the estimate
		\[
		\sum_{n = 1}^\infty 2^{-n} |\log_2 (2^{-n})|^2
		= \sum_{n = 1}^\infty 2^{-n} n^2 < \infty 
		\]  
imply that \(1 / \f (|\q|)\) is of finite variation around the origin.
In particular, this means that Standing Assumption~\ref{SA: 1} holds.

As $\s$ is not a dc function on $\bR$, the equivalent conditions (a)--(d) in Theorem~\ref{theo: main1} are violated. Clearly, (c3) holds.
In the following, we show that for any \(x_0 \in \bR \setminus \{0\}\) an ACLMM exists for the market $(\mathbb B_{x_0},X_{\cdot\wedge T})$, that is, (c1)~holds.
The conclusion will be that (c2) cannot be dropped.

We take some \(x_0 > 0\), the case \(x_0 < 0\) being symmetric, 
		and construct the ACLMM explicitly. For that, consider the measure
		\[
		\tilde{\m} (dx) \triangleq \f (x)\, \m (dx) \text{ on } \mathcal{B} (I), \ I \triangleq (0, \infty).
		\]
It is obvious that $\tilde\m([a,b])\in(0,\infty)$ for all $0<a<b<\infty$.
Furthermore, for all \(\varepsilon \in (0, \infty)\), we have 
		\begin{align*} 
			\int_{ (0, \varepsilon) } x\, \tilde{\m} (dx) \geq \sum_{n = 1}^\infty 2^{- n} \, \f (2^{-n}) \, \frac{1}{n^2} \, \1_{\{2^{-n} \, <\, \varepsilon \}} = \sum_{n = 1}^\infty \, \frac{1}{n} \, \1_{\{2^{-n} \, <\, \varepsilon\}} = \infty. 
		\end{align*}
This means that the origin is inaccessible for the general diffusion on natural scale with speed measure \(\tilde{\m}\) and the interior of the state space $I$,
cf. \cite[Proposition 16.43]{breiman1968probability}, that is, its state space will also be~$I$.
Let \((I \ni x \mapsto \tilde{\P}_x)\) be such a diffusion.
We also introduce another general diffusion $(\bR_+\ni x\mapsto\P'_x)$ by the formula
$\P'_x\triangleq\P_x\circ X_{\cdot\wedge T_0}^{-1}$, where
$T_0\triangleq\inf\{t\ge0 \colon X_t=0\}$.
As the state spaces of the general diffusions $(x \mapsto \tP_x)$ and $(x \mapsto \P'_x)$ have the same interior,
the main results from \cite{criensurusov22} apply.
In particular, \cite[Corollary 2.15]{criensurusov22} yields that
\(\tilde{\P}_{x_0} \ll \P'_{x_0}\) on \(\cF_T\).
Recalling that the origin is inaccessible under $\tP_{x_0}$ and, hence,
$\tP_{x_0}(T_0>T)=1$,
we can conclude that
\(\tilde{\P}_{x_0} \ll \P_{x_0}\) on \(\cF_T\).
Further, as the diffusion $(I\ni x\mapsto\tP_x)$ is on natural scale,
under \(\tilde{\P}_{x_0}\), the process \(X\) is a local martingale.
We thus obtain that the restriction
$\tP_{x_0}|\mathcal F_T$ of $\tP_{x_0}$ to the $\sigma$-field $\mathcal F_T$ is an ACLMM for the market
$$
\big((\Omega,\mathcal F,(\mathcal F_t)_{t\in[0,T]},\P_{x_0}|\mathcal F_T),(X_t)_{t\in[0,T]}\big).
$$
It remains to lift this ACLMM to our global setting $(\mathbb B_{x_0},X_{\cdot\wedge T})$.
To this end, let $Z_T\triangleq d(\tP_{x_0}|\mathcal F_T)/d(\P_{x_0}|\mathcal F_T)$ be the Radon-Nikodym derivative of the restrictions of the measures to $\mathcal F_T$.
Then the measure $\bar\P_{x_0}\triangleq Z_T\cdot\P_{x_0}$ on the whole $\mathcal F$ is an ACLMM for the market~\((\mathbb{B}_{x_0}, X_{\cdot \wedge T})\).
\end{example}

\begin{remark}
In the previous example, we also have the following.

\begin{enumerate}
\item[(i)]
\(\tilde{\P}_{x_0}\not \ll \P_{x_0}\) on \(\cF\) (cf. \cite[Corollary 2.16]{criensurusov22}).
This is as suggested by Theorem~\ref{theo: main2}.

\smallskip
\item[(ii)]
No ACLMM exists for the market $(\mathbb B_{x_0},X_{\cdot\wedge T})$ with \(x_0 = 0\).
This follows from Theorem~\ref{theo: main1} because otherwise (d) in Theorem~\ref{theo: main1} would be satisfied.
\end{enumerate}
\end{remark}

\begin{example}[Neither of conditions (c3) and~(d2) in Theorem~\ref{theo: main1} can be removed]\label{ex:031024a2}
Let $(\bR\ni x\mapsto\P_x)$ be the general diffusion constructed in Example~\ref{ex:031024a1}.
We define the general diffusion $(\bR_+\ni x\mapsto\Q_x)$ by the formula
\begin{equation}\label{eq:041024a1}
\Q_x\triangleq\P_x\circ|X|^{-1},\quad x\in\bR_+.
\end{equation}
In other words, this is a diffusion with state space $\bR_+$, scale function
		\[
		\s (x) \triangleq	\int_0^x \f \, (y) dy, \quad x \in \bR_+,
		\]
and speed measure 
		\[
		\m (dx) \triangleq dx + \sum_{n = 1}^\infty \frac{1}{n^2}\, \delta_{2^{-n}} (dx)\text{ on } \mathcal{B} (\bR_+)
		\]
(e.g., see \cite[Lemma B.11]{criensurusov22}).
Clearly, for this diffusion the boundary point $0$ is reflecting (instantaneously, as $\m(\{0\})=0$).
In the following we show that the diffusion $(\bR_+\ni x\mapsto\Q_x)$ provides the necessary example.

In Example~\ref{ex:031024a1} we proved that $X$ is a $\P_x$-semimartingale for all $x\in\bR$.
Hence, $|X|$ is a $\P_x$-semimartingale for all $x\in\bR$.
Consequently, $X$ is a $\Q_x$-semimartingale for all $x\in\bR_+$, which means that Standing Assumption~\ref{SA: 1} holds.

As the origin is reflecting for the diffusion $(x\mapsto\Q_x)$,
the equivalent conditions (a)--(d) in Theorem~\ref{theo: main1} are violated.
Clearly, $\s$ is a dc function on $(0,\infty)$, that is, (c2) holds.
It remains to show that, for any initial value $x_0>0$, there exists an ACLMM for the market
$(\mathbb B_{x_0},X_{\cdot\wedge T})$, where, with a slight abuse of notation\footnote{In
$\mathbb B_{x_0}$ we now use $\Q_{x_0}$ instead of $\P_{x_0}$, as the notation ``$\P_{x_0}$'' is already occupied (see~\eqref{eq:041024a1}).},
$\mathbb B_{x_0}\triangleq(\Omega,\mathcal F,(\mathcal F_t)_{t\ge0},\Q_{x_0})$.
The conclusion will be that (c3) and~(d2) cannot be dropped.

Fix any $x_0>0$. The construction of an ACLMM is now performed in the same way as in Example~\ref{ex:031024a1}.
More precisely, with $\tP_{x_0}$ defined exactly as in Example~\ref{ex:031024a1}, we get that $\tP_{x_0}\ll\Q_{x_0}$ on $\cF_T$. Then the measure
$$
\bar\P_{x_0}\triangleq
\frac{d(\tP_{x_0}|\mathcal F_T)}{d(\Q_{x_0}|\mathcal F_T)}\cdot\Q_{x_0}
\text{ on }\mathcal F
$$
is an ACLMM for the market $(\mathbb B_{x_0},X_{\cdot\wedge T})$.
\end{example}

\section{NA and ACLMMs in non-canonical general diffusion models}\label{sec: main2}

The purpose of this section is to present versions of our main results for general diffusion markets beyond the canonical setting. 

\begin{setting}
Our inputs are a state space $J$, a scale function $\s$ and a speed measure $\m$ as described in Section~\ref{sec: main}.
Further, we consider a filtered probability space
$\mathbb B\triangleq(\Omega,\mathcal F,(\mathcal F_t)_{t\ge0},\P)$
with a right-continuous filtration that supports a regular continuous strong Markov process
$Y=(Y_t)_{t\ge0}$
with state space $J$, scale function $\s$, speed measure $\m$ and a deterministic starting point $x_0\in J^\circ$.
In the above context, the strong Markov property refers to the filtration $(\mathcal F_t)_{t\ge0}$.
\end{setting}

In this section, we work under the following standing assumption, which is merely a condition on the inverse scale function (recall Remark~\ref{rem: dc function SA}).

\begin{SA} \label{SA: 2}
$Y$ is a semimartingale on $\mathbb B$.
\end{SA}

As before, we use the notation $l\triangleq\inf J$ and $r\triangleq\sup J$.
In this (non-canonical) setting, the main results are as follows.

\begin{theorem}\label{th:091024a1}
Consider a finite time horizon $T\in(0,\infty)$.
Then, the following are equivalent:
\begin{enumerate}
\item[\textup{(a)}]
NA holds for the market \((\mathbb{B}, Y_{\cdot\wedge T})\).

\item[\textup{(b)}]
\begin{enumerate}[label=\textup{(b\arabic*)},ref=\textup{(b\arabic*)}]
\item
There exists an ACLMM for the market \((\mathbb{B}, Y_{\cdot\wedge T})\).

\item
The scale function $\s$ is a dc function on $J^\circ$.

\item
Every finite boundary point $b\in\{l,r\}\cap\bR$ is either inaccessible or absorbing for~$Y$.
\end{enumerate}
\end{enumerate}
\end{theorem}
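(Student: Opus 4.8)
The plan is to reduce Theorem~\ref{th:091024a1} to its canonical counterpart, Theorem~\ref{theo: main1}, by transferring the problem to the path space via the law of $Y$. Concretely, let $\Po \triangleq \P \circ Y^{-1}$ be the image measure of $\P$ under the map $\omega \mapsto Y_\cdot(\omega) \in \Omega = C(\bR_+;\bR)$. Since $Y$ is a regular continuous strong Markov process with state space $J$, scale function $\s$, speed measure $\m$ and starting point $x_0$, the measure $\Po$ coincides with $\P_{x_0}$ from the canonical setting. The key observation is that all three conditions appearing in (a) and (b) — NA for the market, existence of an ACLMM, the dc property of $\s$, and the boundary behavior of the process — are \emph{distributional} in nature, i.e., they depend on $(\mathbb B, Y)$ only through the law $\P \circ Y^{-1}$ on path space. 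Once this is established, Theorem~\ref{theo: main1} (applied with the finite time horizon $T \in (0,\infty)$, so that conditions (c2) and (c3) are genuinely present) delivers the equivalence immediately.

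The main work is therefore to justify the distributional invariance of each ingredient. First I would record that the dc property of $\s$ on $J^\circ$ and the boundary classification (inaccessible/absorbing/reflecting) of $\{l,r\}\cap\bR$ are properties of the \emph{characteristics} $(\s,\m)$ alone, hence trivially shared by $Y$ under $\P$ and by $\X$ under $\P_{x_0}$; this is already implicit in the standing assumptions and in Remark~\ref{rem: dc function SA}. Next, NA is preserved: a trading strategy $H \in L(Y)$ on $\mathbb B$ pushes forward to a predictable, $\X$-integrable process on $(\Omega,\cF,\mathbf F,\P_{x_0})$ with the value process transforming covariantly, so $H$ realizes arbitrage for $(\mathbb B, Y_{\cdot\wedge T})$ if and only if the transported strategy realizes arbitrage for $(\mathbb B_{x_0}, \X_{\cdot\wedge T})$; one has to be a little careful that the filtration generated by $Y$ may be strictly smaller than $(\cF_t)$, but the standard argument (as in \cite{DS1995,KabStr2005}) shows NA is insensitive to such filtration enlargements within the continuous semimartingale setting, or alternatively one argues directly on the coordinate filtration since that is what NA for the canonical market refers to. Finally, the existence of an ACLMM transfers in both directions: given an ACLMM $\Q$ for $(\mathbb B, Y_{\cdot\wedge T})$ with $\Q \ll \P$, the image $\Q \circ Y^{-1}$ is absolutely continuous with respect to $\P_{x_0}$ and makes $\X_{\cdot\wedge T}$ a local martingale; conversely, given an ACLMM $\Qo$ for $(\mathbb B_{x_0}, \X_{\cdot\wedge T})$ with density $Z$ w.r.t.\ $\P_{x_0}$, the measure $(Z \circ Y)\cdot\P$ is an ACLMM for $(\mathbb B, Y_{\cdot\wedge T})$.

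The implication (a)~$\Rightarrow$~(b) then follows by pushing forward to path space, invoking Theorem~\ref{theo: main1}~(a)~$\Rightarrow$~(c), and pulling the ACLMM back; conversely, (b)~$\Rightarrow$~(a) follows by pushing (b1)--(b3) forward to obtain condition (c) of Theorem~\ref{theo: main1}, concluding NA for the canonical market, and transferring NA back to $\mathbb B$. I expect the main obstacle to be the careful bookkeeping around filtrations and the measurability of the transported trading strategies — in particular, verifying that $\X$-integrability on path space corresponds exactly to $Y$-integrability on $\mathbb B$ and that one may work with the natural filtration of $Y$ without loss of generality. This is the kind of point that is routine in spirit (it is the standard ``transfer of market models under measurable isomorphisms'' folklore) but requires the right references, presumably to \cite{DS1995} for the filtration-insensitivity of NA and to a change-of-variables lemma for stochastic integrals; I would isolate it as a short lemma stating that $(\mathbb B, Y)$ and $(\mathbb B_{x_0}, \X)$ are equivalent as financial markets in all the relevant respects.
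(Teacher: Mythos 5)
There is a genuine gap in the direction (b)~$\Longrightarrow$~(a), and it sits exactly at the point you wave away: the claim that NA ``depends on $(\mathbb B,Y)$ only through the law $\P\circ Y^{-1}$'' and that ``NA is insensitive to such filtration enlargements''. This is false. Trading strategies for the market $(\mathbb B, Y_{\cdot\wedge T})$ are predictable with respect to $(\cF_t)_{t\ge0}$, which may be a strict enlargement of the natural filtration of $Y$, and enlarging the filtration can create arbitrage without changing the law of $Y$ (take $Y$ a driftless geometric Brownian motion and enlarge $\cF_0$ by $\sigma(Y_T)$: the law is unchanged, NA holds for the natural filtration, and fails for the enlarged one). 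Neither \cite{DS1995} nor \cite{KabStr2005} contains a filtration-insensitivity statement of the kind you invoke. What \emph{is} true, and what the paper uses, is the opposite-direction fact: a \emph{continuous} local martingale stays a local martingale when the filtration is shrunk to any intermediate filtration containing its natural one. Hence an ACLMM for $(\mathbb B, Y_{\cdot\wedge T})$ yields one for $(\mathbb B^Y, Y_{\cdot\wedge T})$, and then, via the change-of-space results of \cite[Section~10.2a]{Jacod} (Theorem~10.37), one for the canonical market $(\mathbb B_{x_0}, \X_{\cdot\wedge T})$. The tell-tale sign that your argument cannot be repaired as written is that it never uses the hypothesis that $Y$ is strongly Markovian with respect to $(\cF_t)_{t\ge0}$; without that hypothesis the theorem is false (as the enlargement example shows), so any proof must use it.

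The paper's route around this obstruction is not to transport NA back from the canonical space at all. Having produced an ACLMM on the canonical space and knowing (b2)--(b3), one runs the argument from the proof of (c)~$\Longrightarrow$~(b) in Theorem~\ref{theo: main1} to verify the \emph{deterministic} criterion for NA of \cite[Theorem~3.9]{criensurusov23} in terms of $(\s,\m)$; that criterion is formulated for general non-canonical settings in which the price process is strong Markov with respect to the given (possibly enlarged) filtration, and so it yields NA directly for $(\mathbb B, Y_{\cdot\wedge T})$. Your direction (a)~$\Longrightarrow$~(b) is essentially fine but more convoluted than necessary: (b1) follows from Theorem~\ref{theo: FTAP NA} applied directly to $(\mathbb B, Y_{\cdot\wedge T})$, and (b2)--(b3) from \cite[Theorem~3.9]{criensurusov23} applied in the non-canonical setting, so no pullback of an ACLMM is needed; note also that your proposed pullback $(Z\circ Y)\cdot\P$ only makes $Y_{\cdot\wedge T}$ a local martingale with respect to the natural filtration of $Y$, not obviously with respect to $(\cF_t)_{t\ge0}$, so that step is itself doubtful. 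To fix the proposal, replace the ``NA is distributional'' lemma by: (i) shrink the filtration to $(\cF^Y_t)_{t\ge0}$ for the ACLMM, (ii) move to the canonical space via \cite[Theorem~10.37]{Jacod}, and (iii) conclude NA on $\mathbb B$ through the deterministic characterization of \cite{criensurusov23}, which is where the strong Markov property with respect to $(\cF_t)_{t\ge0}$ enters.
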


Recalling Examples \ref{ex:031024a1} and~\ref{ex:031024a2}, we observe that neither of the conditions (b2) and~(b3) can be removed from part~(b) of Theorem~\ref{th:091024a1}.
But again, we do not need these conditions in the case of the infinite time horizon:

\begin{theorem}\label{th:091024a2}
Then the following are equivalent:
\begin{enumerate}
\item[\textup{(i)}]
NA holds for the market \((\mathbb{B}, Y)\).

\item[\textup{(ii)}]
There exists an ACLMM for the market \((\mathbb{B}, Y)\).
\end{enumerate}
\end{theorem}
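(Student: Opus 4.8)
The implication (i) $\Rightarrow$ (ii) is immediate from Theorem~\ref{theo: FTAP NA}, which applies to any financial market, so the substance is (ii) $\Rightarrow$ (i), and the plan is to transplant the proof of the implication (ii) $\Rightarrow$ (i) of Theorem~\ref{theo: main2} to the abstract basis $\mathbb B$. Every tool used there is available here, either because the cited result is already stated on a general filtered probability space, or because it transports along the measurable map $\Phi\colon\omega\mapsto Y_\cdot(\omega)$: since $Y$ is a regular continuous strong Markov process with characteristics $(\s,\m)$, state space $J$ and starting point $x_0$, its law pushes $\P$ forward to the canonical diffusion law $\P_{x_0}$, and an ACLMM $\Q$ for $(\mathbb B,Y)$ pushes forward to a measure on path space that is $\ll\P_{x_0}$ and under which the coordinate process is a local martingale (a continuous $(\mathcal F_t)$-$\Q$-local martingale adapted to the natural filtration of $Y$ is still a local martingale in that smaller, right-continuous filtration, and this property survives $\Phi$). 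So, starting from an ACLMM $\Q$ for $(\mathbb B,Y)$, the goal is to deduce that (b2) $\s$ is a dc function on $J^\circ$ and (b3) every finite boundary point is inaccessible or absorbing for $Y$, and only then to conclude NA.

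For (b2) I would run the argument of Theorem~\ref{theo: main2} verbatim: with $\zeta\triangleq\inf\{t\ge0\colon Y_t\notin J^\circ\}$, the process $\s(Y_{\cdot\wedge\zeta})$ is a continuous local $\P$-martingale (a defining property of the scale function of a regular continuous strong Markov process; cf.\ \cite[Corollary~V.46.15]{RW2}), and, as $\q\triangleq\s^{-1}$ is a dc function on $\s(J^\circ)$ by Standing Assumption~\ref{SA: 2} and Remark~\ref{rem: dc function SA}, the generalized It\^o formula gives $\langle Y,Y\rangle_t=\int_0^t(\q'_-(Y_s))^2\,d\langle\s(Y),\s(Y)\rangle_s$ on $[0,\zeta)$, which is $\P$-a.s.\ continuous and strictly increasing there. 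Time-changing by the inverse $\tau$ of $\langle Y,Y\rangle$ and setting $B\triangleq Y_{\tau_\cdot}$, the Dambis--Dubins--Schwarz theorem (\cite[Exercise~V.1.18]{RY}) makes $B$ a $\Q$-Brownian motion on $[0,\langle Y,Y\rangle_\zeta)$; matching the hitting times of $Y$ and $B$ and using $\Q\ll\P$ identifies $\langle Y,Y\rangle_\zeta$ with the exit time of $B$ from $(l,r)$, so that (after stopping) $B$ is a $\Q$-Brownian motion absorbed at $l$ and $r$. Then $\s(Y_{\cdot\wedge\zeta})$ is a $\Q$-semimartingale by Jacod's theorem \cite[Theorem~III.3.13]{JS}, hence so is $\s(B)=\s(Y_{\tau_\cdot\wedge\zeta})$ by \cite[Theorem~10.16]{Jacod}, and the dc-function criterion (the adjustment of \cite[Theorem~5.9]{CinJPrSha} already used for Theorem~\ref{theo: main2}) yields (b2). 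For (b3), now that $\s$ is known to be a dc function, \cite[Lemma~5.11]{criensurusov23} (applied after pushing forward along $\Phi$) identifies the $\Q$-law of $Y$ with the law $\tP_{x_0}$ of the natural-scale diffusion with speed measure $\s'_+\,d\m$ on $J^\circ$ absorbed at accessible boundaries; since $\Q\ll\P$ this forces $\tP_{x_0}\ll\P_{x_0}$ on the whole $\sigma$-field $\mathcal F$, whence \cite[Corollary~2.16]{criensurusov22} shows that $Y$ under $\P$ inherits the boundary behavior of $\tP_{x_0}$, so $Y$ absorbs at accessible boundaries, which is (b3).

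With (b2) and (b3) in hand, and since $\Q$ remains an ACLMM for the stopped market $(\mathbb B,Y_{\cdot\wedge T})$ for each finite $T$, Theorem~\ref{th:091024a1} yields NA for $(\mathbb B,Y_{\cdot\wedge T})$ for all $T\in(0,\infty)$. The step I expect to be the main obstacle is upgrading this to NA for the infinite-horizon market $(\mathbb B,Y)$: in the canonical case this was absorbed into the $T=\infty$ instance of Theorem~\ref{theo: main1}, which has no stand-alone non-canonical counterpart, and ``NA for every finite $T$'' does not obviously imply ``NA for $T=\infty$'' (only the converse is easy). The cleanest remedy, I think, is to bypass Theorem~\ref{th:091024a1} for the conclusion and instead feed the data already assembled --- $\s$ a dc function, the boundary behavior, and the \emph{global} absolute continuity $\tP_{x_0}\ll\P_{x_0}$ on $\mathcal F$ --- into the infinite-horizon deterministic NA criterion \cite[Theorem~3.19]{criensurusov23} together with \cite[Corollary~2.16]{criensurusov22}, the absolute continuity on all of $\mathcal F$ supplying precisely the integrability hypotheses there; alternatively, one argues directly, using the now-identified $\Q$-law of $Y$, that any infinite-horizon arbitrage for $Y$ localizes to a finite-horizon arbitrage. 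The concluding ``some $\Leftrightarrow$ all $x_0$'' claim is then immediate, because (b2), (b3) and the global absolute continuity are all independent of the interior starting point $x_0\in J^\circ$.
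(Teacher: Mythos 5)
Your proposal is correct and follows essentially the paper's own route: the paper likewise moves the ACLMM to the canonical market (restricting to the natural filtration of $Y$ and using \cite[Theorem~10.37]{Jacod}), obtains the dc property of $\s$ and the absorbing-boundary property by citing the implication (ii)~$\Longrightarrow$~(i) of Theorem~\ref{theo: main2} and (a)~$\Longrightarrow$~(c) of Theorem~\ref{theo: main1} for $T=\infty$ (rather than re-running the Dambis--Dubins--Schwarz argument on the abstract basis, as you do), and then concludes exactly via your ``cleanest remedy'': the identification of the $\Q$-law with $\tP_{x_0}$, the global absolute continuity $\tP_{x_0}\ll\P_{x_0}$ on $\cF$, and the deterministic infinite-horizon NA criterion \cite[Theorem~3.19]{criensurusov23}, which is formulated for general non-canonical settings, thereby bypassing Theorem~\ref{th:091024a1} and the (indeed problematic) passage from finite-horizon NA to $T=\infty$. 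The only detail you omit is the paper's caveat that \cite[Theorem~3.19]{criensurusov23} assumes the diffusion is not on natural scale, so that case must be treated separately --- trivially, since with accessible boundaries absorbing $Y$ is then itself a $\P$-local martingale and NA holds.
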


For what follows we introduce the notation
$(\mathcal F^Y_t)_{t\ge0}$
for the right-continuous filtration of $Y$, i.e.,
$\mathcal F^Y_t\triangleq\bigcap_{s>t}\sigma(Y_r,r\le s)$, $t\in\bR_+$,
and set
$\mathbb B^Y\triangleq(\Omega,\mathcal F,(\mathcal F^Y)_{t\ge0},\P)$.

\begin{discussion}\label{disc:091024a1}
In comparison with the canonical setting, here we have no family of measures corresponding to different starting points.
Therefore, the surprising effect that parts (c) and~(d) in Theorem~\ref{theo: main1} are both optimally formulated and equivalent cannot be discussed within the non-canonical setting.

On the other hand, for non-canonical spaces, the results in Theorems \ref{theo: main1} and~\ref{theo: main2} translate only to Theorems \ref{th:091024a1} and~\ref{th:091024a2} with $\mathbb B$ replaced by $\mathbb B^Y$
(more precisely, this follows from results on change of the space from \cite[Section 10.2a]{Jacod}, in particular,
\cite[Theorem 10.37]{Jacod} and \cite[Proposition 10.38~(b)]{Jacod}).
Thus, the essential point in Theorems \ref{th:091024a1} and~\ref{th:091024a2} is that the filtration $(\mathcal F_t)_{t\ge0}$ is allowed to be an enlargement of~$(\mathcal F^Y_t)_{t\ge0}$.
We only need that $Y$ is strongly Markovian w.r.t. $(\mathcal F_t)_{t\ge0}$.
\end{discussion}

\begin{proof}[Proof of Theorem~\ref{th:091024a1}]
Assume~(a).
Then Theorem~\ref{theo: FTAP NA} implies~(b1), while \cite[Theorem 3.9]{criensurusov23} yields (b2) and~(b3).

Now, assume~(b).
Recall that a \emph{continuous} local martingale (say, $M=(M_t)_{t\ge0}$)
w.r.t. some filtration (say, $(\mathcal G_t)_{t\ge0}$)
remains a local martingale w.r.t any filtration $(\mathcal H_t)_{t\ge0}$
that lies between the natural filtration of $M$ and $(\mathcal G_t)_{t\ge0}$.
Hence, by~(b1), there exists an ACLMM for the market $(\mathbb B^Y,Y_{\cdot\wedge T})$.
By \cite[Theorem 10.37]{Jacod}, there exists an ACLMM for the canonical market $(\mathbb B_{x_0},X_{\cdot\wedge T})$, where we use the notation from Section~\ref{sec: main}.
Proceeding in the same way as in the proof of the implication (c)~$\Longrightarrow$~(b) in Theorem~\ref{theo: main1}, we obtain the deterministic condition for NA from Theorem~3.9 in \cite{criensurusov23} (that theorem is for a general non-canonical setting) and conclude that (a) in Theorem~\ref{th:091024a1} is satisfied.
This completes the proof.
\end{proof}

\begin{proof}[Proof of Theorem~\ref{th:091024a2}]
The implication (i)~$\Longrightarrow$~(ii) follows from Theorem~\ref{theo: FTAP NA}.
For the converse, assume that (ii) holds.
In the same way as in the previous proof we get that there exists an ACLMM for the canonical market $(\mathbb B_{x_0},X)$.
Applying first the implication (ii)~$\Longrightarrow$~(i) in Theorem~\ref{theo: main2} and then the implication (a)~$\Longrightarrow$~(c) in Theorem~\ref{theo: main1} for $T=\infty$, we infer that conditions (c2) and~(c3) of Theorem~\ref{theo: main1} are satisfied.
Proceeding in the same way as in the proof of the implication (c)~$\Longrightarrow$~(b) in Theorem~\ref{theo: main1} for $T=\infty$, we obtain the deterministic condition for NA from Theorem~3.19 in \cite{criensurusov23}
(to be precise, the case of natural scale needs to be treated separately,
cf. the proof of the implication (c)~$\Longrightarrow$~(b) in Theorem~\ref{theo: main1}).
Thus, (i) in Theorem~\ref{th:091024a2} is satisfied, and the proof is complete.
\end{proof}

\bibliographystyle{plain}

\end{document}